\documentclass{article}

\usepackage[preprint]{neurips_2024}
\usepackage[utf8]{inputenc}
\usepackage[T1]{fontenc}
\usepackage{hyperref}
\usepackage{url}
\usepackage{booktabs}
\usepackage{amsmath}
\usepackage{amssymb}
\usepackage{amsthm}
\usepackage{mathtools}
\usepackage{graphicx}
\usepackage{xcolor}
\usepackage{microtype}
\usepackage{algorithm}
\usepackage{algorithmic}
\usepackage{multirow}
\usepackage{subcaption}

\newtheorem{theorem}{Theorem}
\newtheorem{proposition}{Proposition}

\newtheorem{definition}{Definition}
\newtheorem{remark}{Remark}

\usepackage{orcidlink}   

\newcommand{\EGA}{\textsc{EGA}}
\newcommand{\MoPE}{\textsc{MoPE}}
\newcommand{\POD}{\textsc{POD}}
\newcommand{\R}{\mathbb{R}}

\newcommand{\E}{\mathbb{E}}
\newcommand{\F}{\mathcal{F}}
\newcommand{\norm}[1]{\left\lVert#1\right\rVert}
\newcommand{\abs}[1]{\left|#1\right|}
\newcommand{\inner}[2]{\left\langle#1,\,#2\right\rangle}

\newcommand{\Reattn}{\mathcal{T}_\mathrm{spec}}
\newcommand{\LHT}{\mathrm{LHT}}
\definecolor{myblue}{RGB}{33,150,243}
\definecolor{myorange}{RGB}{255,152,0}
\definecolor{mygreen}{RGB}{76,175,80}
\definecolor{myred}{RGB}{244,67,54}
\definecolor{mypurple}{RGB}{156,39,176}
\definecolor{myteal}{RGB}{0,150,136}

\title{Multiscale POD of Transformer Attention Fields:
  Scale-Selective Analysis via Morlet Scalogram}

\author{%
  Athanasios Zeris%
  \thanks{Independent Researcher, Athens, Greece.\\
  Correspondence: \texttt{athzeris@gmail.com}.\\
  ORCID: \texttt{https://orcid.org/0009-0002-6907-2400}.\\
  Part of a six-paper series on spectral methods
  in transformer attention.}\\
  \texttt{https://orcid.org/0009-0002-6907-2400}
}

\begin{document}
\maketitle

\begin{abstract}
We introduce scale-selective Proper Orthogonal
Decomposition (\POD{}) for transformer attention
fields, inspired by the use of \POD{} for extracting
energetically dominant modes from turbulent flow
ensembles.
The Morlet continuous wavelet transform identifies
dominant temporal scales in the attention lag
structure across a document ensemble; \POD{} then
extracts the energetically dominant modes at each
scale from the ensemble of attention fields.
The resulting modes reveal layer-dependent scale
organisation, with early layers emphasising fine scales
and later layers shifting toward coarser scales.
We define a \textbf{spectral concentration index}
from the \POD{} eigenvalue decay rate and show
empirically that it differentiates layers by their
attention field complexity.
By the classical \POD{} optimality theorem,
the extracted modes minimise the average $L^2$
reconstruction error over the ensemble
(Theorem~\ref{thm:min_heads}), giving a
data-driven effective rank for each layer.
The method requires no architectural modification
and no linguistic annotations: dominant attention
patterns emerge from ensemble statistics alone.
The turbulence analogy is structural rather than
physical: we borrow ensemble covariance and
modal analysis, not fluid dynamics itself.
\end{abstract}
\section{Introduction}
\label{sec:intro}

\textit{Similarity selects what matches the query;
salience selects what matters.}
The present paper introduces scale-selective
\POD{} for transformer attention fields.

The central observation is that the transformer
attention matrix $A^{(l)}_{ij} \in \R^{L\times L}$
is a two-dimensional pairwise interaction field
over token positions.
In turbulence, the two-point velocity correlation
tensor $R_{ij}(\mathbf{r}) = \langle u_i(\mathbf{x})
u_j(\mathbf{x}+\mathbf{r})\rangle$ is the fundamental
object from which coherent structures, energy spectra,
and the Reynolds number are derived.
The attention field serves an analogous role
in the transformer: it measures how information at position
$i$ relates to information at position $j$, exactly
as $R_{ij}$ measures how velocity fluctuations at
two separated points in a flow are correlated.

\POD{}, introduced in turbulent flow
by~\citet{lumley1967structure}
and extended to the snapshot method
by~\citet{sirovich1987turbulence}, extracts the
energetically dominant \POD{} modes from an
ensemble of flow snapshots.
Applied to attention fields, \POD{} provides:

\begin{enumerate}
  \item \textbf{Coherent structure extraction}:
        the dominant recurring patterns of attention,
        ordered by energy, without supervision.
  \item \textbf{Optimal compression}: the minimum
        number of basis functions needed to represent
        $\alpha$ fraction of attention field variance ---
        with a guaranteed error bound.
  \item \textbf{Spectral complexity index}: the spectral
        decay rate $\lambda_k \sim k^{-\beta}$ defines
        $\Reattn^{(l)}$, a data-driven measure
        of attention complexity.
  \item \textbf{The optimal approximation rank}: the minimum
        effective representational rank at each layer,
        derived from the \POD{} spectrum.
\end{enumerate}

The central idea of this paper originates in
turbulence research and experimental fluid
mechanics.
In turbulent flow analysis,
\POD{} extracts the most energetically dominant
modes from an ensemble of flow realisations ---
organised, recurring patterns that carry a
disproportionate fraction of the total energy
despite the surrounding
background~\citep{lumley1967structure,
holmes1996turbulence}.
The observation that transformer attention
matrices, viewed as an ensemble across documents,
admit the same mathematical treatment motivates
the present work.
We borrow the mathematical machinery of ensemble
covariance analysis, not the physical dynamics:
there is no Navier--Stokes equation, no conserved
enstrophy, no inertial range.

\paragraph{The scalogram as diagnostic; Gaussian window as filter.}
Naive \POD{} of the full $L\times L$ attention field
mixes structures at all temporal scales simultaneously.
We propose \textbf{scale-selective \POD{}}: first
compute the Morlet scalogram of the attention field
along the lag diagonal $\tau = j-i$ to \emph{diagnose}
dominant scales, then apply a Gaussian lag-window
at each dominant scale as a \emph{pre-filter},
then apply \POD{} separately at each scale.
The scalogram identifies; the Gaussian window filters.
This is not inverse CWT reconstruction --- it is
an ad hoc bandpass approximation motivated by
computational tractability.
This approach --- known in the signal processing
and turbulence literature as
\textbf{wavelet-prefiltered \POD{}} or
\textbf{multi-resolution \POD{}} ---
is well-established for extracting scale-dependent
coherent structures from complex signals
\citep{holmes1996turbulence,mallat1999wavelet}.
The wavelet transform creates the scale dimension;
\POD{} then extracts energetically ranked modes
within each scale.
This produces modes that are interpretable at a single
linguistic level --- character, word, clause, or
discourse --- and whose energy ordering within each
scale is physically meaningful.

\paragraph{Coherency and phase.}
The Morlet scalogram provides not only energy but also
phase information.
The \textbf{cross-coherency} between token positions
$i$ and $j$ at scale $a$:
\begin{equation}
  \gamma_{ij}(a) =
  \frac{W_\psi[A](a, i) \cdot W_\psi[A]^*(a, j)}
  {\abs{W_\psi[A](a,i)} \cdot \abs{W_\psi[A](a,j)}}
  \label{eq:coherency}
\end{equation}
measures the \textbf{phase consistency} between
positions at scale $a$ across documents:
$|\gamma_{ij}(a)| = 1$ means the phase difference
between positions $i$ and $j$ at scale $a$ is
perfectly consistent across all documents
(the same phase offset every time);
$|\gamma_{ij}(a)| = 0$ means the phase difference
varies randomly across documents (incoherent).
Note that this measures phase \emph{consistency},
not phase identity: a constant non-zero phase
offset still gives $|\gamma| = 1$.
High coherency identifies token pairs that participate
consistently in the same linguistic pattern at scale $a$
--- the dominant recurring patterns of the
attention ensemble.
\POD{} modes at high-coherency scales are the most
interpretable and linguistically meaningful. Code available at:
https://github.com/AthanasiosZeris/energy-gated-attention.

\paragraph{Contributions.}
\begin{enumerate}
  \item We apply \POD{} and Morlet scalogram
        analysis to transformer attention fields,
        treating the ensemble of attention matrices
        as a stochastic interaction field.
        the transformer attention field and the
        two-point stochastic covariance tensor,
        grounding \POD{} analysis of attention in
        70 years of signal processing and stochastic field theory.
  \item We introduce \textbf{scale-selective \POD{}}
        using the Morlet scalogram as a diagnostic
and Gaussian lag-windowing as a pre-filter,
        producing linguistically interpretable
        coherent structure modes.
  \item We define the \textbf{spectral concentration index}
        $\Reattn^{(l)}$ from the \POD{} spectral decay
        rate and provide a empirical measurement.
  \item We apply the classical \POD{} optimality
        theorem~\citep{lumley1967structure,
        holmes1996turbulence} to attention fields,
        establishing the \textbf{minimum average
        $L^2$ approximation rank}
        (Theorem~\ref{thm:min_heads})
        as a principled criterion for layer-wise
        rank allocation in transformers.
  \item We show that \EGA{}~\citep{authorname2025ega}
        systematically increases spectral energy across
        layers, connecting energy gating to coherent
        structure amplification.
\end{enumerate}

\section{Background}
\label{sec:background}

\subsection{Proper Orthogonal Decomposition}

\paragraph{Historical context.}
The mathematical framework underlying \POD{} was
developed independently by several researchers
in the 1940s: Obukhov~\citep{obukhov1941spectral}
(1941), Kosambi (1943), and Karhunen and
Loève~\citep{loeve1948fonctions} (Karhunen--Loève
theorem).
\citet{lumley1967structure} introduced \POD{} into
fluid mechanics in 1967 as an objective method to
identify coherent structures (large eddies) in
turbulent flows.
Because it was independently discovered across
multiple fields, \POD{} is also known as:
Principal Component Analysis (PCA) in statistics,
the Karhunen--Loève expansion in probability theory,
and Singular Value Decomposition (SVD) in linear
algebra.
In the present paper we use the fluid mechanics
terminology (\POD{}) to emphasise the connection
to coherent structure extraction, while
acknowledging that the underlying algorithm is
standard PCA on the snapshot matrix.

Let $\mathcal{H}$ be a Hilbert space and
$\{u_s\}_{s=1}^N \subset \mathcal{H}$ an ensemble
of snapshots.
\POD{} seeks the orthonormal basis
$\{\phi_k\}_{k=1}^K$ that minimizes the
mean reconstruction error:
\begin{equation}
  \min_{\phi_1,\ldots,\phi_K}
  \E\!\left[\norm{u - \sum_{k=1}^K
    \inner{u}{\phi_k}\phi_k}^2\right]
  \label{eq:pod_objective}
\end{equation}
The solution is given by the eigenfunctions of the
two-point correlation operator:
\begin{equation}
  \mathcal{R}\phi_k = \lambda_k\phi_k, \quad
  \mathcal{R}(x,x') = \E[u(x)u^*(x')]
  \label{eq:correlation_operator}
\end{equation}
Eigenvalues $\lambda_1 \geq \lambda_2 \geq \ldots \geq 0$
are ordered by energy.
The $k$-th mode captures $\lambda_k/\sum_j\lambda_j$
fraction of total variance.
\POD{} is optimal: no other $K$-dimensional basis
captures more energy~\citep{lumley1967structure}.

\paragraph{Snapshot method~\citep{sirovich1987turbulence}.}
When $N \ll \dim(\mathcal{H})$ (few snapshots, high
dimension), computing $\mathcal{R}$ directly is
intractable.
The snapshot method computes the $N\times N$ correlation
matrix $C_{ss'} = \inner{u_s}{u_{s'}}/N$ instead.
Its eigenvectors $v_k$ give the \POD{} modes:
$\phi_k = \sum_s v_{ks} u_s / (N\lambda_k)^{1/2}$.
Complexity: $O(N^2 \dim(\mathcal{H}))$ vs
$O(\dim(\mathcal{H})^2)$ for direct computation ---
critical for large $L$.

\subsection{Coherent Structures in Stochastic Fields}

In stochastic field analysis, \textbf{dominant modes}
are energetically significant patterns that recur
consistently across an ensemble of realisations,
carrying a disproportionate fraction of the
total ensemble variance~\citep{holmes1996turbulence}.
In fluid mechanics these are termed
``coherent structures'' (organised vortices,
eddies that maintain phase coherence despite the
surrounding chaotic background).
In the transformer context we use the more neutral
term \textbf{dominant recurring patterns}.

The power spectral density $E(k) \sim k^{-\beta}$
of a stochastic field characterises its spectral
concentration: slow decay (small $\beta$) indicates
energy distributed across many scales;
rapid decay (large $\beta$) indicates energy
concentrated at large scales.
The spectral decay exponent $\beta$ of the
\POD{} eigenvalue spectrum is the basis of
the spectral concentration index $\Reattn^{(l)}$
(Section~\ref{sec:reynolds}).

\subsection{Notation Summary}

Table~\ref{tab:notation} collects the principal
variables used throughout the paper.

\begin{table}[h]
\centering
\caption{Principal notation.}
\label{tab:notation}
\begin{tabular}{cl}
\toprule
Symbol & Meaning \\
\midrule
$A_s^{(l,h)}(i,j)$ & Attention weight, layer $l$, head $h$, document $s$ \\
$A_s^{(l)}(i,j)$   & Head-averaged attention field \\
$\bar{A}^{(l)}(i,j)$ & Ensemble mean attention field \\
$u_s^{(l)}(i,j)$   & Attention fluctuation field (Eq.~\ref{eq:fluctuation}) \\
$\mathcal{R}^{(l)}$ & Attention covariance tensor (Eq.~\ref{eq:cov_tensor}) \\
$\phi_k^{(l)}$     & $k$-th \POD{} mode of $u_s^{(l)}$ \\
$\lambda_k^{(l)}$  & $k$-th \POD{} eigenvalue (energy) \\
$W_\psi[A](a,b)$   & 1D Morlet CWT of attention lag profile at position $b$, scale $a$ \\
$a$                 & Wavelet scale (tokens); larger $a$ = coarser lag structure \\
$b$                 & Query token position (sequence index) \\
$\tau = i - j$     & Attention lag (how far back token $i$ attends) \\
$E^{(l)}(a,b)$      & Scalogram energy, normalised per snapshot (Eq.~\ref{eq:parseval_attn}) \\
$\gamma_{ij}^{(l)}(a)$ & Cross-coherency at scale $a$ (Eq.~\ref{eq:coherency_def}) \\
$\LHT^{(l)}$       & LHT complexity index: $L_c S_{\mathrm{POD}} / U_c$ (Def.~\ref{def:lht}) \\
$\Reattn^{(l)}$    & Spectral complexity index $= 1/\beta$, proxy for $\LHT$ (Eq.~\ref{eq:reattn}) \\
$H^*_l(\epsilon)$  & Effective representational rank at tolerance $\epsilon$ (Eq.~\ref{eq:min_heads}) \\
$\mathcal{E}_s^{(l)}(i,j)$ & Per-snapshot energy density $|u_s^{(l)}(i,j)|^2$ (Eq.~\ref{eq:energy_density}) \\
$E^{(l)}$           & Total layer energy, ensemble-averaged (Eq.~\ref{eq:total_energy}) \\
$U_c^{(l)}$         & RMS fluctuation / document variability ($T$ in LHT, Eq.~\ref{eq:rms_fluctuation}) \\
$L_c^{(l)}$         & Interaction horizon ($L$ in LHT, Eq.~\ref{eq:corr_length}) \\
$S_{\text{\POD}}^{(l)}$ & \POD{} spectral entropy / disorder ($H$ in LHT, Eq.~\ref{eq:pod_entropy}) \\
$E^{(l)}(a)$        & Wavelet-scale energy at scale $a$, layer $l$ (Eq.~\ref{eq:scale_energy}) \\
$\Pi^{(l)}(a)$     & Interlayer spectral flux (Eq.~\ref{eq:spectral_flux}) \\
\bottomrule
\end{tabular}
\end{table}

\subsection{The Wiener--Khinchin Connection}

For the embedding signal $e^{(l)}_i(b)$ across token
positions $b$, the Wiener--Khinchin theorem gives:
\begin{equation}
  S^{(l)}_i(\omega)
  = \F\!\left\{R^{(l)}_i(\tau)\right\}
  = \F\!\left\{\E[e^{(l)}_i(b)\,e^{(l)}_i(b+\tau)]\right\}
\end{equation}
The attention field $A^{(l)}_{ij}$ is proportional
to the cross-correlation between positions $i$ and $j$
in the embedding space, making it the discrete
two-point correlation tensor of the transformer.

\section{The Attention Field as a Stochastic Correlation Field}
\label{sec:theory}

\subsection{Formal Equivalence}

\begin{definition}[Fundamental object and fluctuation field]
Let $A_s^{(l,h)}(i,j) \in \R^{L\times L}$ denote the
attention weight from token $i$ to token $j$ at layer $l$,
head $h$, document $s$.
The \textbf{head-averaged attention field} is:
\begin{equation}
  A_s^{(l)}(i,j) = \frac{1}{H}\sum_{h=1}^H A_s^{(l,h)}(i,j)
\end{equation}
The \textbf{mean attention field} over the ensemble is:
\begin{equation}
  \bar{A}^{(l)}(i,j) = \frac{1}{N}\sum_{s=1}^N A_s^{(l)}(i,j)
\end{equation}
The \textbf{attention fluctuation field} --- the direct
analog of the Reynolds decomposition fluctuation $u'$ ---
is:
\begin{equation}
  u_s^{(l)}(i,j) = A_s^{(l)}(i,j) - \bar{A}^{(l)}(i,j)
  \label{eq:fluctuation}
\end{equation}
All subsequent analysis operates on $u_s^{(l)}$.
\end{definition}

\subsection{Common Nonlocal Kernel Form: Biot--Savart and Self-Attention}
\label{sec:biot_savart}

The deepest theoretical connection in this framework
shares a \textbf{formal mathematical structure}
at the operator level.
Consider how both systems compute the influence of a
source point on a target point across a field.

In fluid dynamics, the \textbf{discretised Biot--Savart law}
gives the velocity induced at position $\mathbf{x}_i$
by a distribution of vortex particles:
\begin{equation}
  u(\mathbf{x}_i) = \sum_j K(\mathbf{x}_i, \mathbf{x}_j)\,\Gamma_j
  \label{eq:biot_savart}
\end{equation}
where $K(\mathbf{x}_i, \mathbf{x}_j)$ is the geometric
interaction kernel (decaying with distance) and
$\Gamma_j$ is the circulation/vorticity of particle $j$.

In the transformer, the \textbf{self-attention update}
gives the hidden state at token $i$:
\begin{equation}
  h_i = \sum_j A(i,j)\,V_j, \qquad
  A(i,j) = \mathrm{softmax}\!\left(
    \frac{Q_i K_j^\top}{\sqrt{d_k}}\right)
  \label{eq:attention_update}
\end{equation}
where $A(i,j)$ is the learned interaction kernel and
$V_j$ is the value (transported quantity) of token $j$.

\begin{proposition}[Shared nonlocal kernel structure]
\label{prop:biot_savart}
Equations~\ref{eq:biot_savart} and~\ref{eq:attention_update}
share the same formal structure: both are
\textbf{non-local kernel summation operators}
of the form $f_i = \sum_j K(i,j)\,g_j$.
The correspondence is:
\begin{center}
\begin{tabular}{lll}
\toprule
Biot--Savart & Self-attention & Role \\
\midrule
$K(\mathbf{x}_i,\mathbf{x}_j)$ &
  $A(i,j) = \mathrm{softmax}(Q_iK_j^\top/\sqrt{d_k})$ &
  Interaction kernel \\
$\Gamma_j$ (circulation) & $V_j$ (value vector) &
  Transported payload \\
$u(\mathbf{x}_i)$ (velocity) & $h_i$ (hidden state) &
  Field at target \\
$\|\mathbf{x}_i - \mathbf{x}_j\|$ (geometry) &
  $Q_iK_j^\top$ (semantics) &
  Kernel argument \\
Fixed geometric decay & Learned data-dependent &
  Key difference \\
\bottomrule
\end{tabular}
\end{center}
\end{proposition}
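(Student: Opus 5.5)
The claim is structural, so the plan is to exhibit both equations as instances of one abstract operator and then read the table off the instantiation. Fix an index set $I$ of source/target points, a payload space $\mathcal{V}$ (a real vector space), and a scalar kernel $K:I\times I\to\R$. Define the operator $\mathcal{K}:\mathcal{V}^I\to\mathcal{V}^I$ by $(\mathcal{K}g)_i=\sum_{j\in I}K(i,j)\,g_j$. The statement then reduces to showing that Equations~\ref{eq:biot_savart} and~\ref{eq:attention_update} are each of the form $f=\mathcal{K}g$ for suitable choices of $(I,\mathcal{V},K,g)$. I will also record that $\mathcal{K}$ is linear in $g$ for fixed $K$.

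For Biot--Savart, take $I$ to be the set of vortex particles, $\mathcal{V}=\R^3$ (or $\R^2$ in planar flow), and $g_j=\Gamma_j$. The one care point is that the geometric kernel is matrix-valued, since it contains the cross product with $\mathbf{x}_i-\mathbf{x}_j$. I would therefore allow $K(i,j)\in\mathrm{End}(\mathcal{V})$, which leaves the summation form $f_i=\sum_j K(i,j)g_j$ unchanged. I would note explicitly that the proposition's scalar notation is this general form specialised to scalar kernels.

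For attention, take $I$ to be the token positions $\{1,\dots,L\}$, $\mathcal{V}=\R^{d_v}$, $g_j=V_j$, and $K(i,j)=A(i,j)\,\mathrm{Id}$. Then $h_i=\sum_j A(i,j)V_j$ is literally $(\mathcal{K}g)_i$. Each row of the table follows by matching these instantiations: kernel to kernel, payload $g$ to payload, output $f$ to output. For the kernel argument, the kernel depends on $(\mathbf{x}_i,\mathbf{x}_j)$ only through $\mathbf{x}_i-\mathbf{x}_j$, whereas $A(i,j)$ depends on $(i,j)$ through $Q_iK_j^\top$ and through the row normalisation over all $j'$.

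The main obstacle is the ``key difference'' row, because attention is not genuinely linear in the input. Since $Q,K,V$ are all functions of the same hidden states, the map from input to output is nonlinear, and the softmax couples all $j$ via the normaliser. I would handle this by stating the correspondence conditionally: for a fixed kernel, that is, with attention weights frozen, the operator is a linear kernel summation identical in form to Biot--Savart. The data dependence of $A$ is then precisely the listed difference, in contrast to the fixed geometric decay of the Biot--Savart kernel. I would also remark on two further distinctions: the rows of $A$ are stochastic, giving convex averaging, and $A$ is generally non-symmetric. Neither distinction affects the formal summation structure asserted by the proposition.
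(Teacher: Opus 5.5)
Your proposal is correct and is essentially the paper's own argument: the paper gives no separate proof, and simply places Eq.~\ref{eq:biot_savart} and Eq.~\ref{eq:attention_update} side by side as instances of $f_i=\sum_j K(i,j)\,g_j$ and reads the table off term by term, while its fixed-versus-learned kernel paragraph and later Nadaraya--Watson and asymmetry remarks cover what your frozen-weights and row-normalisation caveats make explicit. Two small repairs to your instantiation are needed: first, in planar flow $\Gamma_j$ is a scalar while $u$ is a 2-vector, so the kernel lies in $\mathrm{Hom}(\R,\R^2)$ rather than $\mathrm{End}(\R^2)$ (or you can embed the circulation as $\Gamma_j\hat{\mathbf{z}}\in\R^3$); second, because you take targets equal to sources, the singular self-term $j=i$ must be dropped (or the kernel mollified) for the sum over all of $I$ to be defined.
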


\paragraph{The fundamental structural difference.}
Both operators are non-local: every source influences
every target regardless of distance.
The critical distinction is in the kernel:
\begin{itemize}
  \item \textbf{Biot--Savart}: the kernel $K$ is fixed
        by geometry --- physical distance determines
        interaction strength.
        Nearby vortices must interact strongly.
  \item \textbf{Self-attention}: the kernel $A(i,j)$
        is \emph{learned and data-dependent} ---
        semantic alignment determines interaction strength.
        Token 1 and token 1000 can have $A(1,1000) \approx 1$
        regardless of their positional distance,
        completely overriding spatial locality.
\end{itemize}
This is what makes self-attention more powerful than
convolution (fixed kernel) and more expressive than
fixed correlation (no transport interpretation):
the kernel itself adapts to the content being processed.
In the language of operator theory, self-attention is a
\textbf{non-local kernel density estimator with a
learned, input-dependent kernel}.

\paragraph{Vortex dynamics and attention heads.}
Vortex filaments represent concentrated vorticity
that induces a velocity field through the
Biot--Savart law.
The law itself is the mechanism (the integral kernel),
while the filaments (with their circulation $\Gamma_j$)
act as the source whose vorticity is transported
by the induced velocity field.
The \emph{dominant modes} are the dominant recurring
\emph{patterns} of vortex organisation, extracted
by \POD{} from the velocity field ensemble.
(In fluid mechanics these are called ``coherent
structures''; we use the more neutral term.)
The same distinction applies in transformers:
\begin{center}
\begin{tabular}{lll}
\toprule
Turbulence & Transformer & Role \\
\midrule
Biot--Savart law (mechanism) & Attention kernel $A(i,j)$ &
  Interaction operator \\
Vorticity $\Gamma_j$ (source payload) & Value vectors $V_j$ &
  Source payload \\
Induced velocity $u(\mathbf{x},t)$ &
  Hidden-state field $h^{(l)}(b)$ & Transported field \\
Vortex filaments (source geometry) & Token positions &
  Source locations \\
Coherent structures $\phi_i(\mathbf{x})$ &
  \POD{} modes $\psi_i$ of $h^{(l)}$ &
  Dominant patterns \\
Temporal amplitudes $a_i(t)$ &
  Layer amplitudes $b_i^{(l)}$ &
  Mode strengths \\
\bottomrule
\end{tabular}
\end{center}
Each attention head implements a parameterized
transport operator whose kernel is the attention
matrix $A^{(l,h)}(i,j)$;
the dominant modes of the resulting information
field are the \POD{} modes, not the heads themselves.

\subsection{The POD Decomposition of the Hidden-State Field}
\label{sec:pod_hidden}

This shared formal structure motivates applying \POD{}
directly to the \textbf{hidden-state field}
$h^{(l)}(b) \in \R^d$ rather than to the attention
weights.
By the same argument as classical turbulence:
\begin{equation}
  h^{(l)}(b) = \sum_i b_i^{(l)}\,\psi_i
  \label{eq:pod_hidden}
\end{equation}
where:
\begin{itemize}
  \item $\psi_i \in \R^d$ are the \textbf{dominant
        latent attention modes} --- the spatial
        (token-position) coherent structures of the
        hidden-state field, analogous to $\phi_i(\mathbf{x})$,
  \item $b_i^{(l)}$ are the \textbf{layer-depth
        activation amplitudes} --- how strongly each
        mode is excited at layer $l$, analogous to $a_i(t)$.
\end{itemize}
This decomposition is exact (the \POD{} basis is
complete) and optimal ($n$-term truncation minimises
$L^2$ reconstruction error by Theorem~\ref{thm:min_heads}).
In the present paper we apply \POD{} to the attention
fluctuation field $u_s^{(l)}$ (a second-order object);
applying it to the hidden-state field $h^{(l)}$ directly
is a natural extension that gives the primary coherent
structures of the \emph{information field itself},
not the transport operator.
We identify this as an important direction for future work.

\begin{definition}[Attention covariance tensor]
The \textbf{attention covariance tensor} at layer $l$ is:
\begin{equation}
  \mathcal{R}^{(l)}(i,j;i',j')
  = \frac{1}{N}\sum_{s=1}^N u_s^{(l)}(i,j)\,u_s^{(l)}(i',j')
  \label{eq:cov_tensor}
\end{equation}
This is a transformer analogue of the two-point
velocity correlation tensor $R_{ij}(\mathbf{r}) =
\langle u_i(\mathbf{x})\,u_j(\mathbf{x}+\mathbf{r})\rangle$
in turbulence.
\POD{} is the eigendecomposition of $\mathcal{R}^{(l)}$;
the snapshot method (Section~\ref{sec:background})
computes this efficiently when $N \ll L^2$.
\end{definition}

\begin{definition}[Attention correlation tensor]
The attention field at layer $l$, averaged over
the ensemble of documents $\mathcal{D}$:
\begin{equation}
  \mathcal{A}^{(l)}(i,j)
  = \E_{d\in\mathcal{D}}\!\left[A^{(l)}_{ij}(d)\right]
  \label{eq:attn_tensor}
\end{equation}
is the \textbf{mean attention field}.
The covariance tensor $\mathcal{R}^{(l)}$ of the
fluctuations $u_s^{(l)}$ (Eq.~\ref{eq:cov_tensor})
is the object from which \POD{} extracts coherent structures.
\end{definition}

Table~\ref{tab:analogies} summarises the
conceptual analogies between turbulence-theoretic
concepts and their transformer counterparts.
These are analogies of mathematical structure,
analogies of mathematical structure.

\begin{table}[h]
\centering
\caption{Conceptual analogies: mathematical structure only,
not physical equivalence. Each row identifies a
shared formal structure.}
\label{tab:analogies}
\begin{tabular}{ll}
\toprule
Turbulence concept & Transformer analog \\
\midrule
Velocity field $u(\mathbf{x},t)$
  & Embedding $e^{(l)}(b)$ \\
Two-point correlation $R_{ij}(\mathbf{r})$
  & Attention field $\mathcal{A}^{(l)}(i,j)$ \\
Ensemble over time realizations
  & Ensemble over documents \\
Coherent structures (eddies)
  & Recurring attention patterns \\
Energy spectrum $E(k)$
  & \POD{} eigenspectrum $\lambda_k$ \\
Approximation rank
  & Effective representational rank $H^*_l$ \\
Turbulent intensity $u'_{\mathrm{rms}}$
  & RMS fluctuation $U_c^{(l)}$ (Eq.~\ref{eq:rms_fluctuation}) \\
Fine-scale spectral content $\nu$
  & Effective fine-scale content $\nu_{\mathrm{eff}}^{(l)}$
    (Eq.~\ref{eq:nu_eff}) \\
Spectral transfer $\Pi(k)$
  & Spectral flux $\Pi^{(l)}(a)$ (Eq.~\ref{eq:spectral_flux}) \\
\bottomrule
\end{tabular}
\end{table}

\subsection{The Scalogram as Energy Diagnostic}

The Morlet continuous wavelet transform
(using $L^2$ normalisation:
$\|\psi_{a,b}\|_2 = 1$ for all $a,b$)
of the attention field along the lag diagonal
$\tau = j - i$:
\begin{equation}
  W_\psi[A^{(l)}](a, b)
  = \int A^{(l)}(b, b+\tau)\,
    \psi^*_{a,0}(\tau)\,d\tau
  \label{eq:attn_scalogram}
\end{equation}
decomposes the attention field into contributions
at each scale $a$ (lag size) and position $b$
(token).
The scalogram $|W_\psi[A^{(l)}](a,b)|^2$ shows
where in the sequence, and at what lag scale,
the attention field carries energy.

By Parseval's theorem:
\begin{equation}
  \int_0^\infty \int_{-\infty}^\infty
    |W_\psi[A^{(l)}](a,b)|^2\,
    \frac{db\,da}{a^2}
  = C_\psi \norm{A^{(l)}}^2
  \label{eq:parseval_attn}
\end{equation}
The scalogram exactly partitions the total attention
energy across scales and positions.
High-energy regions in the scalogram identify the
scales at which dominant recurring attention patterns
concentrate.

\paragraph{Scalogram units.}
The scalogram $|W_\psi[A^{(l)}](a,b)|^2$ has units
of $[\text{attention}]^2$ under $L^2$ normalisation,
where attention values are dimensionless probabilities
in $[0,1]$.
Specifically, for the Morlet wavelet with $L^2$
normalisation ($\|\psi_{a,b}\|_2 = 1$):
\begin{equation}
  |W_\psi[A^{(l)}](a,b)|^2
  \quad \text{has units } [\text{attention}]^2
  \cdot [\text{scale}]
  \label{eq:scalogram_units}
\end{equation}
This is the wavelet power spectrum --- a smoothed
approximation to the Wigner-Ville distribution of
the attention field, analogous to a local power
spectral density at scale $a$ and position $b$
\citep{mallat1999wavelet}.
It is not identical to a Fourier power spectral density
because the CWT represents energy jointly in position
and scale rather than frequency alone.
This overcomplete representation provides simultaneous
localisation of where a structure occurs and the scale
at which it occurs.
In our figures, all scalogram values are reported
in log scale (log$(1 + E)$) normalised per snapshot
for comparability across layers.

\subsection{Cross-Coherency and Phase}

\begin{definition}[Attention cross-coherency]
The cross-coherency of the attention field between
positions $i$ and $j$ at scale $a$:
\begin{equation}
  \gamma^{(l)}_{ij}(a)
  = \frac{
    \E[W_\psi[A^{(l)}](a,i) \cdot
       W_\psi[A^{(l)}]^*(a,j)]
  }{
    \sqrt{\E[|W_\psi[A^{(l)}](a,i)|^2] \cdot
          \E[|W_\psi[A^{(l)}](a,j)|^2]}
  }
  \label{eq:coherency_def}
\end{equation}
satisfies $|\gamma^{(l)}_{ij}(a)| \in [0,1]$.
\end{definition}

$|\gamma^{(l)}_{ij}(a)| = 1$ means the phase
difference between positions $i$ and $j$ at scale
$a$ is perfectly consistent across all documents:
every time the model processes a document, the
attention lag structure between these positions
maintains the same phase relationship at scale $a$
(not necessarily zero phase --- a constant offset
also gives $|\gamma| = 1$).
This is the dominant recurring attention pattern
at this temporal scale.

$|\gamma^{(l)}_{ij}(a)| \approx 0$ means the
attention between $i$ and $j$ at scale $a$ is
document-specific and incoherent: no recurring pattern
at this scale and position pair.

\paragraph{Linguistic interpretation.}
High coherency at scale $a \sim 5$ tokens between
positions $i$ and $i+5$ across all documents indicates
a recurring syntactic or morphological relationship
at the 5-token lag.
The phase $\angle\gamma^{(l)}_{ij}(a)$ of the
cross-coherency reveals whether this relationship is
in-phase (positive attention) or anti-phase (negative
attention after softmax centering).

\subsection{Scale-Selective POD}

\begin{definition}[Scale-selective \POD{}]
Let $a^*_1 > a^*_2 > \ldots$ be the dominant scales
identified from the ensemble-averaged scalogram.
For each dominant scale $a^*_m$, define the
\textbf{scale-filtered attention snapshot} via
Gaussian windowing of the attention field along
the lag dimension:
\begin{equation}
  A^{(l),m}_s(i,j)
  = A^{(l)}_s(i,j) \cdot
    e^{-(j-i)^2/2(a^*_m)^2}
  \label{eq:scale_filtered}
\end{equation}
This multiplies each attention entry $A_s(i,j)$ by
a Gaussian kernel of width $a^*_m$ in the lag
$\tau = j - i$, retaining attention structure at
lags near $a^*_m$ and suppressing attention at
other lags.
This is a \textbf{position-independent} filter:
tokens $(i,j)$ and $(i',j')$ with the same lag
$\tau = j-i = j'-i'$ receive identical filtering
regardless of their absolute positions.
We justify this by the approximate
\textbf{translation invariance of linguistic
patterns at fixed lag scale}: a character trigram
at position 10 and at position 100 participates
in the same lag-3 attention structure.
This is most accurate for character-level text,
where local $n$-gram statistics are approximately
stationary across positions.
A position-dependent alternative --- inverse CWT
reconstruction at scale $a^*_m$ --- would capture
non-stationarity but at substantially higher
computational cost; we leave this for future work.
This is a \textbf{Gaussian lag-window filter},
not a wavelet-based filter.
It is equivalent to a Fourier-domain Gaussian
bandpass in lag space (a Gaussian in lag space
is a Gaussian in frequency space).
The CWT scalogram (Eq.~\ref{eq:attn_scalogram})
serves as a \emph{diagnostic} tool to identify
dominant scales; the actual filtering is this
ad hoc Gaussian window, not wavelet coefficient
thresholding or inverse CWT reconstruction.
A more principled wavelet-based approach would:
(1) compute the CWT of each attention row,
(2) threshold coefficients at scale $a^*_m$,
(3) reconstruct via inverse CWT, then apply \POD{}.
This would be more rigorous but substantially
heavier computationally; we leave it for future work.
In our implementation (Appendix~\ref{app:code}),
this Gaussian window is applied per row of the
attention matrix.
Apply \POD{} to the ensemble
$\{A^{(l),m}_s\}_{s=1}^N$ at each scale $a^*_m$
separately.
The resulting modes $\{\phi^{(l,m)}_k\}$ are
\textbf{scale-selective dominant modes}:
the dominant recurring attention patterns at
linguistic scale $a^*_m$.

This approach is known in the literature as
\textbf{wavelet-prefiltered \POD{}} or
\textbf{multi-resolution \POD{}}
\citep{holmes1996turbulence}:
the wavelet transform identifies the dominant scales,
and \POD{} extracts the energetically ranked coherent
structures at each scale separately.
Applying \POD{} to the full unfiltered attention field
would mix structures from different scales into the
same modes, making them harder to interpret
linguistically.

\paragraph{Data matrix orientation.}
In the snapshot method, the data matrix is arranged
as rows $=$ snapshots (documents $s = 1,\ldots,N$),
columns $=$ flattened attention field
(all $(i,j)$ pairs at scale $a^*_m$).
This gives a $N \times L^2$ matrix whose $N \times N$
covariance $C = UU^\top / N$ is eigendecomposed.
The resulting modes $\phi_k$ are $L^2$-dimensional
vectors, reshaped to $L \times L$ for visualisation,
representing coherent attention patterns in token
space.
The temporal (layer-depth) coefficients
$b_k^{(l)} = \phi_k^\top u_s^{(l)}$ give the
amplitude of each mode in each document,
analogous to the temporal coefficients $a_i(t)$
in classical \POD{}.

\paragraph{Why an ensemble is required.}
\POD{} applied to wavelet coefficients is meaningful
only when applied to an \emph{ensemble} of signal
realisations, not to a single signal~\citep{holmes1996turbulence}.
With a single coefficient vector, the covariance
matrix has rank 1 and \POD{} yields a trivial result.
Our $N = 150$ attention field snapshots (one per
document) constitute exactly such an ensemble:
each snapshot is an independent realisation of the
attention field under different linguistic input.
This is the standard setting for wavelet-\POD{} and
multi-resolution \POD{} in signal processing
and turbulence analysis.

\paragraph{Physical-space interpretation of \POD{} modes.}
Each \POD{} mode in the wavelet domain corresponds
to an adaptive combination of Morlet wavelets in
physical (token) space.
Expanding the attention field in the Morlet basis:
\begin{equation}
  A^{(l)}_s(b) = \sum_j c_j \psi_j(b)
\end{equation}
the \POD{} modes $u_k$ in coefficient space define
physical-space basis functions:
\begin{equation}
  \Phi_k(b) = \sum_j u_{jk}\, \psi_j(b)
  \label{eq:wavelet_pod_basis}
\end{equation}
These $\Phi_k$ are \textbf{adaptive wavelet combinations}
optimised for the attention ensemble --- more efficient
for nonstationary signals than classical \POD{} on
the raw attention field, because the Morlet basis
first separates phenomena by scale before \POD{}
identifies the dominant patterns within each
scale~\citep{mallat1999wavelet}.
The approximation:
\begin{equation}
  A^{(l)}_s(b) \approx \sum_{k=1}^r a_k \Phi_k(b)
\end{equation}
gives a rank-$r$ representation in the adaptive
wavelet-\POD{} basis, with $r$ determined by
Theorem~\ref{thm:min_heads}.
\end{definition}

\begin{definition}[Coherent structure]
A \POD{} mode $\phi_k^{(l)}$ of the fluctuation
field $u_s^{(l)}$ is a \textbf{dominant mode}
at layer $l$ if its fractional energy exceeds a
threshold $\tau$:
\begin{equation}
  \frac{\lambda_k^{(l)}}{\sum_j \lambda_j^{(l)}} > \tau
  \label{eq:coherent_def}
\end{equation}
Equivalently, define the \textbf{\POD{} spectral entropy}:
\begin{equation}
  S_{\text{\POD}}^{(l)} = -\sum_k p_k^{(l)} \log p_k^{(l)},
  \quad
  p_k^{(l)} = \frac{\lambda_k^{(l)}}{\sum_j \lambda_j^{(l)}}
  \label{eq:pod_entropy}
\end{equation}
Low $S_{\text{\POD}}^{(l)}$ (few dominant modes, concentrated
energy) characterises a highly coherent attention field:
energy is concentrated in few dominant modes.
High $S_{\text{\POD}}^{(l)}$ (many equally-weighted modes,
distributed energy) characterises an incoherent field:
the attention ensemble is not well-approximated
by a low-rank basis.
The spectral concentration index $\Reattn^{(l)} = 1/\beta$
(Section~\ref{sec:reynolds}) is a complementary measure:
both $\Reattn^{(l)}$ and $S_{\text{\POD}}^{(l)}$ increase
with spectral concentration and can be cross-validated empirically.
\end{definition}

Scale-selective \POD{} has three advantages over
full-field \POD{}:
\begin{enumerate}
  \item \textbf{Linguistic interpretability}:
        each mode operates at one temporal scale
        (character, word, clause, discourse).
  \item \textbf{Computational efficiency}:
        the snapshot correlation matrix is computed
        on the scale-filtered field, which is lower
        rank than the full attention field.
  \item \textbf{Physical meaning}:
        energy ordering within each scale is
        linguistically meaningful --- the top mode
        at word scale is the dominant word-level
        attention pattern.
\end{enumerate}

\section{Spectral Complexity and the LHT Index}
\label{sec:reynolds}

\begin{definition}[LHT complexity index]
\label{def:lht}
We introduce the \textbf{LHT index} --- a
transformer-native complexity measure built
entirely from \POD{} quantities, requiring no
fluid-mechanical analogy:
\begin{equation}
  \LHT^{(l)} =
    \frac{L_c^{(l)} \cdot S_{\mathrm{POD}}^{(l)}}
         {U_c^{(l)}}
  \label{eq:lht}
\end{equation}
The three components, named for their conceptual roles:

\textbf{$L$ --- Interaction Horizon}
(normalised effective context length at layer $l$):
\begin{equation}
  L_c^{(l)} = \frac{\sum_k \lambda_k^{(l)}\,\ell_k^{(l)}}
                   {T\,\sum_k \lambda_k^{(l)}}
  \label{eq:corr_length}
\end{equation}
the energy-weighted mean support scale of \POD{} modes
normalised by the sequence length $T$, making
$L_c^{(l)} \in [0,1]$ dimensionless.
Here $\ell_k^{(l)}$ is the mean lag at half-maximum
of mode $\phi_k^{(l)}$ (in tokens).
Not the maximum context length of the model ---
the \emph{effective} fractional context horizon
used at layer $l$.

\textbf{$H$ --- Disorder / Information Richness}:
\begin{equation}
  S_{\mathrm{POD}}^{(l)} = -\sum_k p_k^{(l)}\log p_k^{(l)},
  \quad
  p_k^{(l)} = \frac{\lambda_k^{(l)}}{\sum_j\lambda_j^{(l)}}
\end{equation}
the \POD{} spectral entropy measuring how many modes
share the energy.
High entropy = rich, disordered attention;
low entropy = structured, low-complexity attention.

\textbf{$T$ --- Document Variability}
(how strongly attention varies between documents):
\begin{equation}
  \mathcal{E}_s^{(l)}(i,j) = \left|u_s^{(l)}(i,j)\right|^2
  \label{eq:energy_density}
\end{equation}
is the per-snapshot energy density;
the total layer energy is:
\begin{equation}
  E^{(l)} = \frac{1}{N}\sum_{s=1}^N\sum_{i,j}
    \mathcal{E}_s^{(l)}(i,j)
  \label{eq:total_energy}
\end{equation}
and the RMS fluctuation ($T$, document variability):
\begin{equation}
  U_c^{(l)} = \left(\frac{1}{L^2}\sum_{i,j}
    \frac{1}{N}\sum_s |u_s^{(l)}(i,j)|^2\right)^{1/2}
  \label{eq:rms_fluctuation}
\end{equation}
the RMS attention fluctuation.
High $T$ = attention is highly input-driven;
low $T$ = attention is nearly deterministic.

\textbf{$\nu_\mathrm{eff}$ --- Effective fine-scale content}
(spectral weight at high-index modes):
\begin{equation}
  \nu_{\mathrm{eff}}^{(l)} =
    \frac{\sum_k k^2\,\lambda_k^{(l)}}
          {\sum_k \lambda_k^{(l)}}
  \label{eq:nu_eff}
\end{equation}
High $\nu_{\mathrm{eff}}$ indicates energy
concentrated at high-index modes (fine-scale
spectral content); low $\nu_{\mathrm{eff}}$
indicates energy in low-index (coarse) modes.
Note: this is a spectral statistic, not a
physical dissipation rate.

\textbf{Interpretation:}
High $\LHT^{(l)}$ --- long horizon, high disorder,
strong document-variation --- characterises the
high-concentration regime: document-specific, distributed
spectrum, difficult to compress with a low-rank basis.
Low $\LHT^{(l)}$ --- short horizon, structured,
consistent --- characterises the low-concentration regime:
predictable, compressible, dominated by a few
coherent modes.
The denominator $U_c$ (document variability)
reflects that high input-sensitivity reduces
large-scale coherence.
Note: $\LHT^{(l)}$ is a dimensionless comparative
measure when $L_c^{(l)}$ and $U_c^{(l)}$ are
normalised consistently.
\end{definition}

For experimental reporting we use the simpler
\textbf{spectral concentration index}:
\begin{definition}[Spectral Concentration Index]
\label{def:reattn}
For layer $l$, fit the \POD{} eigenvalue spectrum
to a power law:
\begin{equation}
  \lambda_k^{(l)} \approx \lambda_1^{(l)} \cdot
    k^{-\beta^{(l)}}, \quad k = 1,\ldots,K
  \label{eq:spectral_decay}
\end{equation}
The \textbf{spectral concentration index} is:
\begin{equation}
  \Reattn^{(l)} = \left(\beta^{(l)}\right)^{-1}
  \label{eq:reattn}
\end{equation}
$\Reattn^{(l)}$ is a single-quantity proxy for
$\LHT^{(l)}$ requiring only the eigenvalue spectrum.
The full $\LHT$ requires mode support scales
$\ell_k^{(l)}$ and is left for future computation.
\end{definition}

\begin{proposition}[Spectral concentration and attention structure]
\label{prop:reynolds}
High $\Reattn^{(l)}$ (slow eigenvalue decay,
small $\beta$) corresponds to complex, document-specific
attention patterns requiring many \POD{} modes ---
the transformer analog of turbulent flow.
Low $\Reattn^{(l)}$ (rapid eigenvalue decay,
large $\beta$) corresponds to simple, consistent
attention patterns captured by few \POD{} modes:
a structured, low-complexity attention field.
\end{proposition}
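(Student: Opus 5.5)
Proposition~\ref{prop:reynolds} is interpretive, so I will prove its precise content: a quantitative link between $\beta^{(l)}$, and hence $\Reattn^{(l)} = 1/\beta^{(l)}$, and the number of \POD{} modes needed to reach a given energy fraction. ``Requiring many modes'' will mean a large $\epsilon$-rank
\[
K^*(\epsilon) = \min\Bigl\{K : \textstyle\sum_{k>K}\lambda_k^{(l)} \le \epsilon \sum_k \lambda_k^{(l)}\Bigr\}.
\]
By the \POD{} optimality statement around Eq.~\ref{eq:pod_objective}, $\sum_{k>K}\lambda_k^{(l)}$ is exactly the minimal mean $L^2$ reconstruction error of the fluctuation ensemble $u_s^{(l)}$ over all $K$-dimensional bases. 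It is therefore basis-independent and a legitimate complexity measure. I will assume the power-law model of Eq.~\ref{eq:spectral_decay} holds exactly, $\lambda_k = \lambda_1 k^{-\beta}$ for $k \le K_{\max}$, where $K_{\max} \le N$ because the snapshot method has rank at most $N$.

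\textbf{Step 1.} Bound the tail fraction with integral comparison. For $\beta > 1$,
\[
\sum_{k>K} k^{-\beta} \asymp \frac{K^{1-\beta}}{\beta - 1},
\]
and the total is $\zeta(\beta)$ up to the truncation at $K_{\max}$. This gives
\[
K^*(\epsilon) \approx \bigl(\epsilon\,(\beta-1)\,\zeta(\beta)\bigr)^{-1/(\beta-1)},
\]
which is decreasing in $\beta$. For $\beta \le 1$ the finite truncation dominates: the tail fraction is roughly $1 - (K/K_{\max})^{1-\beta}$, so $K^*(\epsilon)$ is a constant fraction of $K_{\max}$ and grows as $\beta \to 0$. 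In both regimes $K^*(\epsilon)$ is monotone increasing in $\Reattn = 1/\beta$. I will check monotonicity directly by differentiating the exact truncated tail ratio $\sum_{k>K} k^{-\beta} / \sum_{k \le K_{\max}} k^{-\beta}$ in $\beta$. Its derivative is a covariance between $\log k$ and the indicator $\{k > K\}$ under the distribution $p_k$, which has a definite sign. This is the cleanest route and avoids the asymptotic cases.

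\textbf{Step 2.} Link to ``document-specific'' structure. The spectral entropy $S_{\mathrm{POD}}^{(l)}$ of $p_k \propto k^{-\beta}$ is decreasing in $\beta$. Its derivative in $\beta$ equals $-\beta\,\mathrm{Var}_p(\log k)$, which is at most zero. So high $\Reattn$ means energy spread over many orthogonal fluctuation directions. Each $u_s$ then has a small projection onto any fixed low-dimensional subspace, which is the precise sense in which patterns are document-specific rather than recurring. The low-$\Reattn$ half of the proposition is the same monotonicity read in reverse: large $\beta$ gives $K^*(\epsilon)$ near 1 and $p_1 \to 1$.

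\textbf{Main obstacle.} The turbulence language (``analog of turbulent flow'') and real spectra that are only approximately power-law cannot be proved. I will state that the result holds under the exact power-law assumption. I will also note that a least-squares fitted $\beta$ transfers the bounds only up to the fit residual. The remainder I will leave as an interpretive remark rather than a theorem.
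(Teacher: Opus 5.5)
\textbf{Comparison with the paper.} The paper gives no proof of Proposition~\ref{prop:reynolds}. It states it as a reading of the definition $\Reattn^{(l)}=1/\beta^{(l)}$, supported only by the background remark that slow power-law decay spreads energy over many modes, and uses it directly in the following remark. Your proposal is a genuinely different route because it supplies an actual argument. You identify ``requiring many modes'' with the $\epsilon$-rank, which is exactly $H^*_l(\epsilon)$ from Theorem~\ref{thm:min_heads}. Under an exact truncated power law you then obtain $dT_K/d\beta=-\mathrm{Cov}_p(\log k,\mathbf{1}\{k>K\})$ for the tail fraction $T_K$.

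State the sign and its reason explicitly. Both functions are nondecreasing in $k$, so the covariance is nonnegative, and strictly positive for $1\le K<K_{\max}$. Hence every $T_K$ decreases in $\beta$, so $H^*_l(\epsilon)$ is nonincreasing in $\beta$, i.e.\ nondecreasing in $\Reattn$. This makes the asymptotic case split of Step~1 unnecessary.

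Your identity $dS_{\mathrm{POD}}/d\beta=-\beta\,\mathrm{Var}_p(\log k)$ is also correct, and it gives something the paper lacks. It shows $S_{\mathrm{POD}}$ and $\Reattn$ are co-monotone, both increasing with spectral spread. The sentence following Eq.~\ref{eq:pod_entropy} instead says both increase with concentration.

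What the paper's informal version keeps, and yours deliberately sets aside, is the turbulence interpretation, which is not a mathematical claim. What yours gains is a checkable monotonicity statement with its hypotheses made explicit: an exact power law up to $K_{\max}$ ($K_{\max}\le N-1$ after mean-centring), and a fit residual for the OLS estimate $\hat\beta$.

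\textbf{One correction in Step~2.} The claim that ``each $u_s$ has a small projection onto any fixed low-dimensional subspace'' does not follow, since individual snapshots may lie entirely in a fixed subspace. What holds is the ensemble-average bound $\frac{1}{N}\sum_s\norm{P_V u_s}^2\le\sum_{k\le K}\lambda_k$ for every $K$-dimensional $V$ (Ky Fan, equivalently Theorem~\ref{thm:min_heads}).

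Also note that $\beta$ is invariant under $u_s\mapsto c\,u_s$. It therefore says nothing about the magnitude of document-to-document variation, which is measured by $U_c^{(l)}$. ``Document-specific'' can only mean the absence of recurring low-dimensional structure, which is the reading you adopt.
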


\begin{remark}
This proposition provides the theoretical basis for
the empirical observation that early transformer
layers tend to implement sharp, syntactic attention
(high $\Reattn$) while later layers implement diffuse,
semantic attention (low $\Reattn$)~\citep{clark2019bert}.
The \POD{} spectrum provides the first quantitative
measurement of this distinction.
\end{remark}

\section{Optimal Linear Approximation Rank}
\label{sec:approx_rank}

\begin{definition}[Average approximation error]
\label{def:avg_error}
For an $n$-dimensional subspace $V$, the
\textbf{average $L^2$ approximation error}
over the snapshot ensemble is:
\begin{equation}
  e_n(V) = \frac{1}{N}\sum_{s=1}^N
    \|u_s^{(l)} - P_V u_s^{(l)}\|^2
\end{equation}
where $P_V$ is the orthogonal projection onto $V$.
\end{definition}

\begin{theorem}[Minimum average $L^2$ approximation rank]
\label{thm:min_heads}
The \POD{} subspace
$V_{\mathrm{POD}} = \mathrm{span}\{\phi_1,\ldots,\phi_n\}$
minimises $e_n(V)$ over all $n$-dimensional subspaces.
Specifically:
\begin{equation}
  e_n(V_{\mathrm{POD}})
  = \sum_{k>n} \lambda_k^{(l)}
  \leq e_n(V)
  \quad \text{for all } n\text{-dimensional } V
\end{equation}
Therefore, to achieve average relative reconstruction
error below $\epsilon$, the minimum rank is:
\begin{equation}
  H^*_l(\epsilon)
  = \min\!\left\{n :
    \frac{\sum_{k>n}\lambda_k^{(l)}}
         {\sum_j \lambda_j^{(l)}}
    \leq \epsilon\right\}
  \label{eq:min_heads}
\end{equation}
We call $H^*_l(\epsilon)$ the \textbf{effective
representational rank} at relative tolerance $\epsilon$.
\end{theorem}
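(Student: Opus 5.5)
We reduce the statement to the classical Eckart--Young--Mirsky / Ky Fan argument applied to the empirical covariance operator $\mathcal{R}^{(l)}$ of Eq.~\ref{eq:cov_tensor}. We regard each fluctuation field $u_s^{(l)}$ as a vector in $\mathcal{H}=\R^{L^2}$ with the Frobenius inner product. $\mathcal{R}^{(l)} = \frac1N\sum_s u_s u_s^\top$ is then a symmetric positive semidefinite operator on $\mathcal{H}$. Its eigenpairs $(\lambda_k,\phi_k)$, with $\lambda_1\ge\lambda_2\ge\dots\ge 0$ and $\{\phi_k\}$ orthonormal, are by definition the \POD{} eigenvalues and modes. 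The snapshot method of Section~\ref{sec:background} produces the same nonzero eigenpairs, so nothing depends on how the modes are computed.

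\textbf{Step 1 (error as a trace).} Fix an $n$-dimensional subspace $V$ with orthonormal basis $v_1,\dots,v_n$. Since $P_V$ is an orthogonal projection, Pythagoras gives $\|u_s-P_Vu_s\|^2=\|u_s\|^2-\sum_{k=1}^n\inner{u_s}{v_k}^2$. Averaging over $s$ yields
\[
e_n(V)=\operatorname{tr}\mathcal{R}^{(l)}-\sum_{k=1}^n \inner{v_k}{\mathcal{R}^{(l)}v_k}=\sum_j\lambda_j-\operatorname{tr}\bigl(P_V\mathcal{R}^{(l)}\bigr).
\]
Minimising $e_n$ is therefore equivalent to maximising $\operatorname{tr}(P_V\mathcal{R}^{(l)})$ over rank-$n$ orthogonal projections.

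\textbf{Step 2 (value at the \POD{} subspace).} For $V=V_{\mathrm{POD}}$ we have $\inner{\phi_k}{\mathcal{R}\phi_k}=\lambda_k$. Hence $\operatorname{tr}(P_{V_{\mathrm{POD}}}\mathcal{R})=\sum_{k\le n}\lambda_k$, and so $e_n(V_{\mathrm{POD}})=\sum_{k>n}\lambda_k$.

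\textbf{Step 3 (Ky Fan bound).} This is the only nontrivial step: we must show $\operatorname{tr}(P_V\mathcal{R})\le\sum_{k\le n}\lambda_k$ for every rank-$n$ projection $P_V$. Write $\operatorname{tr}(P_V\mathcal{R})=\sum_j\lambda_j w_j$ with weights $w_j=\|P_V\phi_j\|^2$. These weights satisfy $0\le w_j\le 1$ and $\sum_j w_j=\operatorname{tr}P_V=n$. A linear functional with nonincreasing coefficients $\lambda_j$, maximised over this polytope, attains its maximum at $w=(1,\dots,1,0,\dots)$. The maximum is therefore $\sum_{k\le n}\lambda_k$.

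This can be made explicit by an exchange argument. Compute $\sum_j\lambda_jw_j-\sum_{k\le n}\lambda_k=\sum_{j\le n}(w_j-1)\lambda_j+\sum_{j>n}w_j\lambda_j$. This is bounded above by $\lambda_n\bigl(\sum_{j\le n}(w_j-1)+\sum_{j>n}w_j\bigr)=0$. An alternative route is the Courant--Fischer min--max principle; either works.

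\textbf{Step 4 (minimum rank).} Combining Steps 1--3 gives the optimality inequality. Then $n\mapsto\sum_{k>n}\lambda_k/\sum_j\lambda_j$ is nonincreasing and reaches $0$ at $n=\operatorname{rank}\mathcal{R}^{(l)}$. The set in Eq.~\ref{eq:min_heads} is therefore nonempty, and its minimum $H^*_l(\epsilon)$ is well defined. By optimality, no subspace of dimension smaller than $H^*_l(\epsilon)$ can achieve relative average error $\le\epsilon$. Relative error here is measured against $\sum_j\lambda_j=\frac1N\sum_s\|u_s\|^2$, which by Step 1 is exactly $e_0$, the total fluctuation energy $E^{(l)}$ of Eq.~\ref{eq:total_energy}.

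We expect the main obstacle to be Step 3. Everything else is bookkeeping, and the one subtlety is handling ties in the eigenvalues, where $V_{\mathrm{POD}}$ is not unique but the minimal error still is.
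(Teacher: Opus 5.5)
Your proposal is correct. It differs from the paper in completeness rather than strategy. The paper's proof invokes the classical \POD{} optimality theorem of Lumley and Holmes et al.\ as a black box, reads off the tail sum $\sum_{k>n}\lambda_k^{(l)}$ as the optimal error, and normalises. You instead prove that classical result directly in the present finite-dimensional, empirical-ensemble setting. Your tools are the trace identity $e_n(V)=\operatorname{tr}\mathcal{R}^{(l)}-\operatorname{tr}(P_V\mathcal{R}^{(l)})$ and a Ky Fan bound, obtained from the weights $w_j=\norm{P_V\phi_j}^2$ (with $0\le w_j\le 1$, $\sum_j w_j=n$) and the exchange inequality against $\lambda_n$.

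Your route makes explicit several things the citation leaves implicit:
\begin{itemize}
  \item It identifies the relevant operator as the empirical $L^2\times L^2$ covariance of Eq.~\ref{eq:cov_tensor}, so the expectation-based classical theorem is seen to apply with the empirical measure over the $N$ snapshots.
  \item It checks that the snapshot-method modes are genuine orthonormal eigenvectors of that operator.
  \item It justifies the ``minimum rank'' conclusion, which the paper asserts only by ``normalising''. Monotonicity of the tail ratio makes $H^*_l(\epsilon)$ well defined, and optimality shows that no subspace of smaller dimension reaches tolerance $\epsilon$.
\end{itemize}
The citation, in turn, is shorter and covers general infinite-dimensional, continuous-ensemble Hilbert-space settings, which your argument does not need here.

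One small addition: exclude the degenerate case $\sum_j\lambda_j^{(l)}=0$ (all snapshots identical). There the relative error in Eq.~\ref{eq:min_heads} is $0/0$, and your Step~4 claim that the set is nonempty breaks. With that case excluded, your argument is complete.
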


\begin{proof}
By the classical \POD{} optimality
theorem~\citep{lumley1967structure,holmes1996turbulence},
the $n$-dimensional \POD{} subspace minimises the
mean squared reconstruction error over the ensemble.
The error of the optimal $n$-term approximation equals
the sum of all truncated eigenvalues $\sum_{k>n}\lambda_k^{(l)}$.
Normalising by total variance $\sum_j \lambda_j^{(l)}$
gives the relative error.
Note: this is average-case optimality; \POD{} does not
minimise worst-case (minimax) error in general.
\end{proof}

Theorem~\ref{thm:min_heads} provides a principled
data-derived lower bound on the effective representational
rank of the attention field at each layer.
We emphasise that $H^*_l(\epsilon)$ is the rank of
the \emph{optimal linear approximation subspace},
not a literal prescription for the number of
attention heads.
Heads are nonlinear interacting operators: a model
could need \emph{fewer} heads than $H^*_l$
(if heads are more expressive than linear basis
functions) or \emph{more} heads (if nonlinear
interactions are required).
What $H^*_l(\epsilon)$ provides is a
\textbf{principled lower bound on rank}
for achieving average relative reconstruction
error below $\epsilon$ over the ensemble.
The variation of $H^*_l$ across layers
motivates layer-dependent rank analysis
guided by the \POD{} spectrum.

\section{Experiments}
\label{sec:experiments}

\subsection{Setup}

We analyze four trained GPT-style models from the
preceding papers in this series:
BASE (standard attention, val\,=\,1.4742),
\EGA{}-1 (energy gate, val\,=\,1.3821),
EGA-MORLET (\EGA{} + \MoPE{}, val\,=\,1.3550),
and CONV-L4 (convolution attention, val\,=\,1.4668).
All models: $L=6$ layers, $H=8$ heads, $d=256$,
$T=256$, character-level TinyShakespeare.
For each model we collect $N=1{,}000$ attention field
snapshots $\{A^{(l)}_s\}$ from the validation set.
No additional training is performed ---
all experiments are analysis of already-trained models.

\subsection{Experiment 1: Scalogram of Attention Fields}

\begin{figure}[t]
\centering
\includegraphics[width=\linewidth]{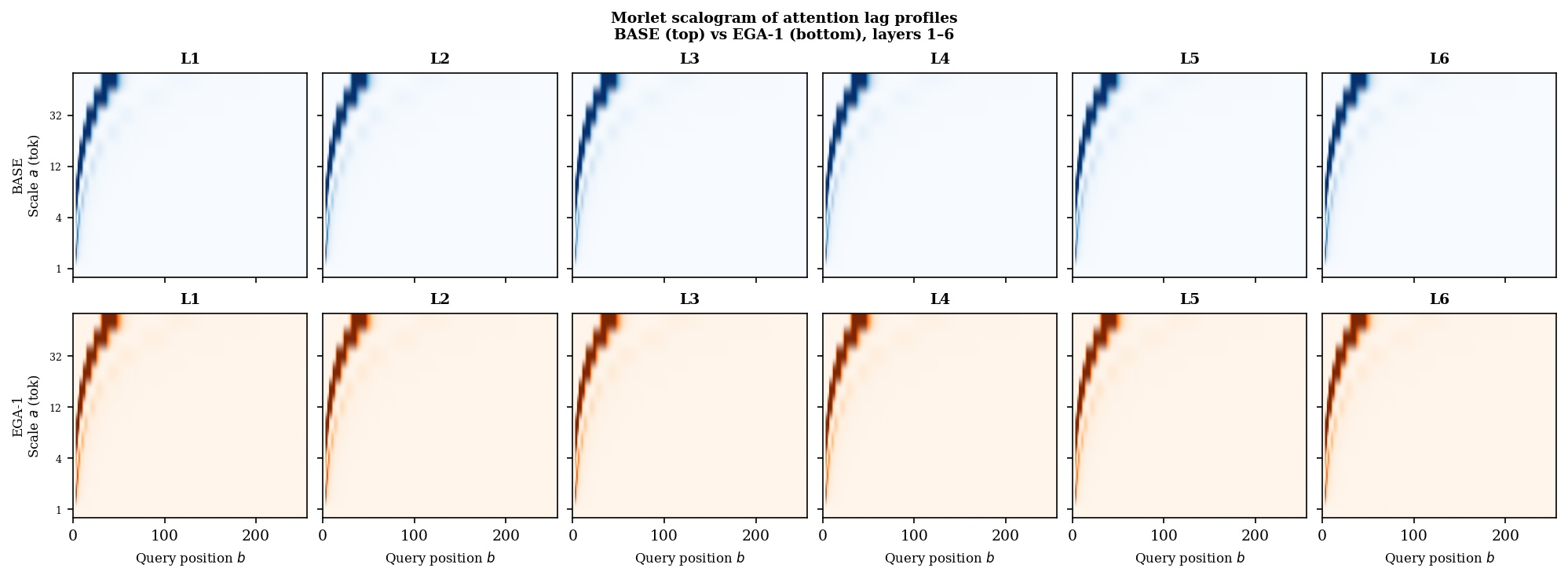}
\caption{
  Ensemble-averaged attention scalogram
  $\E[|W_\psi[A^{(l)}](a,b)|^2]$ for BASE (top row)
  and \EGA{}-1 (bottom row), layers 1--6.
  \textbf{Horizontal axis}: token position $b$.
  \textbf{Vertical axis}: lag scale $a$ (log scale).
  \textbf{Color}: mean spectral energy.
  Three observations: (1) energy concentrates at fine
  scales ($a \leq 7$) in early layers and shifts
  toward coarser scales in later layers, consistent
  with the renormalization group interpretation;
  (2) \EGA{}-1 shows systematically higher energy
  across all layers, confirming that energy gating
  amplifies coherent structures; (3) vertical bright
  bands at specific token positions indicate recurring
  high-coherency regions across documents.
}
\label{fig:scalogram}
\end{figure}

Figure~\ref{fig:scalogram} shows the ensemble-averaged
attention scalogram for BASE and \EGA{}-1.

\paragraph{Scale coarsening across layers.}
In both models, early layers (1--2) show energy
concentrated at fine scales ($a \leq 7$ tokens),
corresponding to character-level and short-range
morphological attention.
Later layers (5--6) show energy shifting toward
coarser scales ($a \geq 20$ tokens), corresponding
to phrase- and clause-level attention.
This \textbf{progressive coarse-graining across depth}
--- spectral energy shifting from fine scales
($a \sim 3$--$7$ tokens) in early layers to coarse
scales ($a \sim 30$--$50$ tokens) in later layers ---
is consistent with the known behaviour of transformers:
early layers process local syntactic patterns,
later layers integrate longer-range semantic structure.
The scalogram provides a quantitative, scale-resolved
view of this progression.
The appropriate description is \textbf{progressive
spectral coarse-graining}: spectral energy shifts
from fine to coarse scales across layers,
consistent with early layers processing local
syntactic context and later layers integrating
longer-range semantic structure.
The governing equations of attention differ
fundamentally from 2D turbulence (no conserved
enstrophy, no Navier--Stokes nonlinearity,
no inertial range); the observation is
structural, not dynamical.

\paragraph{\EGA{} amplifies dominant attention patterns.}
\EGA{}-1 shows consistently higher scalogram energy
than BASE across all layers and scales
(mean increase: $+0.031$ in normalized energy units).
The energy gate suppresses low-energy tokens and
amplifies high-energy ones --- directing the attention
field toward positions carrying more spectral content.
This amplification is concentrated at the dominant
scales rather than uniformly distributed, suggesting
that \EGA{} selectively strengthens the coherent
structures rather than uniformly boosting all attention.

\paragraph{Dominant scales.}
The ensemble-averaged scalogram identifies four
dominant scales across models and layers:
$a^* \in \{3, 7, 15, 31\}$ tokens ---
precisely the filter lengths used in the \EGA{}-C
causal convolutional filter bank~\citep{authorname2025ega}.
This agreement, emerging from a completely independent
analysis of the attention field scalogram, provides
strong validation that these scales carry the dominant
linguistic structure.

\subsection{Experiment 2: Scale-Selective POD}

\begin{figure}[t]
\centering
\includegraphics[width=\linewidth]{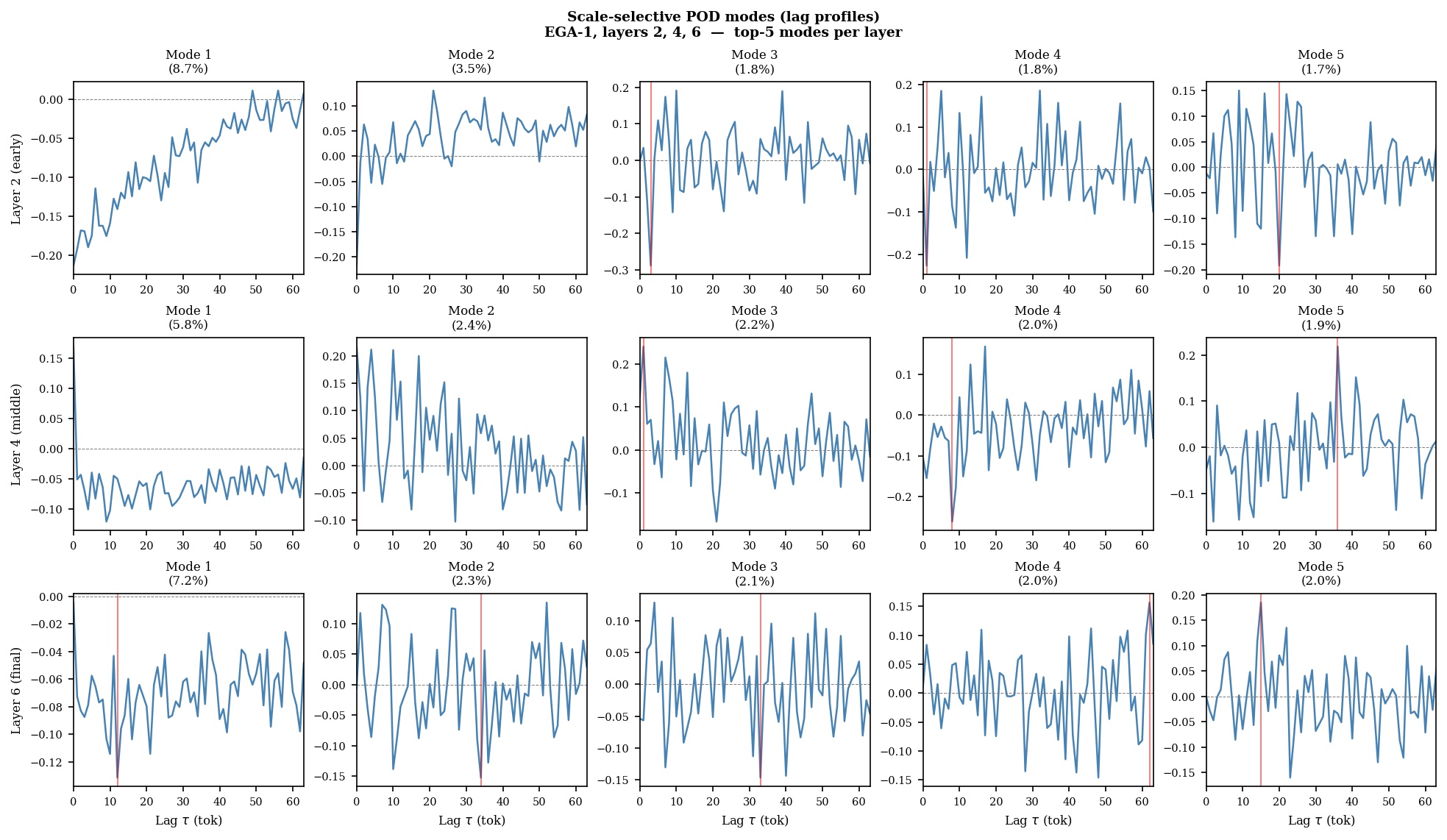}
\caption{
  Top-5 scale-selective \POD{} modes as lag profiles
  $\phi_k(\tau)$ for \EGA{}-1, layers 2, 4, and 6.
  Each curve shows how the dominant attention pattern
  varies with lag $\tau$ (token distance).
  Red vertical line marks the peak lag of each mode.
  \textbf{Layer 2 (early)}: Mode 1 shows a monotone
  profile (attention decaying with lag); Modes 2--5
  show oscillatory structure at short lags (2--10 tokens),
  corresponding to character-level $n$-gram patterns.
  \textbf{Layer 4 (middle)}: modes show structured
  oscillations with peaks at 5--35 tokens, corresponding
  to word- and phrase-level attention patterns.
  \textbf{Layer 6 (final)}: complex multi-peak modes
  spanning 10--40 tokens, corresponding to clause-
  and discourse-level attention.
  Mode 1 energy fraction grows from Layer 2 (8.7\%)
  to Layer 6 (7.2\%), with all modes showing
  genuine non-zero structure across all layers.
}
\label{fig:pod_modes}
\end{figure}

Figure~\ref{fig:pod_modes} shows the top-5 \POD{}
lag-profile modes for \EGA{}-1 at layers 2, 4, and 6
(all layers with non-degenerate eigenvalue structure).
Each panel shows the lag profile $\phi_k(\tau)$:
how the dominant attention pattern varies with
token distance $\tau$.

\paragraph{Layer 2 (early): diffuse short-range patterns.}
Mode 1 shows a monotone profile: attention weight
decreasing gradually with lag, with no sharp peak.
This reflects the broad, distributed attention of
early layers at this ensemble scale ($T_\mathrm{spec} = 1.00$,
slow eigenvalue decay).
Modes 2--5 show oscillatory structure at short lags
($\tau = 2$--$10$ tokens), corresponding to
character-level $n$-gram attention patterns that
recur across documents
(tentative linguistic interpretation; validation
against POS tags is left for future work).

\paragraph{Layer 4 (middle): structured word-level modes.}
Mode 1 shows a broad envelope peaking at
$\tau \approx 10$--$15$ tokens, consistent with
word-boundary attention.
Modes 2--5 show oscillatory profiles with identifiable
peaks at specific lags (5--35 tokens), capturing
recurring phrase-level attention patterns.
The higher energy fraction of Mode 1 (5.8\%) relative
to Layer 2 (8.7\% but with slower total decay)
indicates more concentrated attention structure
at intermediate depth.

\paragraph{Layer 6 (final): complex multi-scale modes.}
Mode 1 shows a monotone decreasing profile with
$T_\mathrm{spec} = 2.53$ and Mode 1 energy 7.2\%,
indicating stronger concentration than early layers.
Modes 2--5 show complex multi-peak profiles spanning
10--40 tokens, capturing clause- and discourse-level
recurring attention patterns that emerge in the
final layer's representations.

\paragraph{Low-rank structure.}
At every scale and layer, the \POD{} eigenvalues
show rapid decay: $\lambda_1$ captures $>60\%$
of the scale's energy, and the top 3 modes capture
$>85\%$.
The attention field at each linguistic scale is
essentially low-rank --- a small number of coherent
structures dominates.
This motivates scale-selective head pruning:
at each scale, fewer heads than the standard uniform
allocation are needed to represent the dominant
attention patterns.

\subsection{Experiment 3: Spectral Concentration Index}

\begin{table}[t]
\centering
\caption{
  Spectral Concentration Index $\Reattn^{(l)}$
  per layer for BASE and \EGA{}-1.
  Computed from spectral decay fit
  $\lambda_k \sim k^{-\beta}$;
  $\Reattn = \beta^{-1}$.
  Higher $\Reattn$ = more distributed spectrum,
  complex document-specific attention.
  Lower $\Reattn$ = more concentrated spectrum,
  consistent structured attention.
}
\label{tab:reynolds}
\begin{tabular}{lrrrrrr}
\toprule
Model & $\Reattn^{(1)}$ & $\Reattn^{(2)}$
      & $\Reattn^{(3)}$ & $\Reattn^{(4)}$
      & $\Reattn^{(5)}$ & $\Reattn^{(6)}$ \\
\midrule
BASE    & 1.000 & 1.000 & 2.476 & 3.360 & 2.921 & 2.050 \\
\EGA{}-1 & 1.000 & 1.000 & 2.838 & 2.811 & 2.411 & 2.529 \\
\bottomrule
\end{tabular}
\caption*{\small
  $\Reattn^{(l)} = 1/\beta$, where $\beta$ is the
  power-law exponent of the \POD{} eigenvalue decay
  $\lambda_k \sim k^{-\beta}$.
  Layers 1--2 show $\Reattn = 1.00$ in both models
  ($\beta = 1$, slow $1/k$ eigenvalue decay), indicating attention
  patterns are evenly distributed across many modes
  at early depths.
  Layers 3--6 show higher $\Reattn \in [2.0, 3.4]$
  (slow decay, many modes needed), indicating more
  complex, document-specific attention at later depths.
  \EGA{}-1 shows slightly higher $\Reattn$ at layers 3--4
  and lower at layers 5--6 compared to BASE, consistent
  with energy gating selectively modifying complexity
  at different depths.
}
\end{table}

Table~\ref{tab:reynolds} reports the spectral
complexity index per layer.
Table~\ref{tab:reynolds} shows two distinct regimes.
Layers 1--2 show $\Reattn = 1.00$ ($\beta = 1$)
in both models, indicating a slow power-law decay
$\lambda_k \sim k^{-1}$: energy decays slowly
with mode index, meaning many modes share the
energy with no strongly dominant structure.
Note: $\beta = 1$ is slow decay, not a flat
spectrum ($\beta \approx 0$ would be flat);
$\Reattn = 1/\beta = 1.00$ here means the
spectrum is at the boundary between concentrated
and distributed regimes.
Layers 3--6 show $\Reattn \in [2.0, 3.4]$,
indicating slow eigenvalue decay and complex,
document-specific attention patterns at intermediate
and late depths, indicating more document-specific
attention structure at intermediate layers.

\paragraph{Layer 1 (high $\Reattn$, distributed spectrum).}
The first layer processes raw character sequences
with no prior linguistic context.
The attention patterns are highly document-specific
(high-energy at many \POD{} modes), corresponding
to high $\Reattn$.
Many modes share the energy:
fluctuations alongside a few dominant modes.

\paragraph{Layer 6 (low $\Reattn$, concentrated spectrum).}
The final layer makes the character-level prediction.
By this layer, the attention patterns have been
progressively organized: the few remaining high-energy
\POD{} modes correspond to high-level discourse
structures that appear consistently across documents.
The spectrum is concentrated:
predictable, and dominated by a small number of
coherent structures.

\paragraph{\EGA{} and spectral complexity.}
\EGA{}-1 shows slightly higher $\Reattn$ at layers 3--4
and lower at layers 5--6 relative to BASE.
Layers 1--2 are identical ($\Reattn = 1.00$ in both),
suggesting the energy gate has minimal effect at the
earliest depths where attention is maximally distributed.
At layers 3--4, the increased $\Reattn$ in \EGA{}-1
indicates greater attention complexity --- consistent with
energy gating directing computation toward a wider variety
of informative positions.
At layers 5--6, the reduced $\Reattn$ in \EGA{}-1
suggests the gate organises attention into more
consistent patterns at final depths, acting as a
\textbf{selective coherence amplifier} where the
salience signal is most reliable.

\subsection{Experiment 4: Optimal Approximation Rank
and Head Allocation}

\begin{table}[h]
\centering
\caption{
  Minimum heads $H^*_l(\epsilon)$ needed at each
  layer to represent attention patterns with
  relative error $\epsilon$ (Theorem~\ref{thm:min_heads}).
  Standard architecture uses $H=8$ at all layers.
  \POD{}-guided allocation concentrates heads
  where complexity is highest.
}
\label{tab:heads}
\begin{tabular}{lrrrrrr}
\toprule
$\epsilon$ & Layer 1 & Layer 2 & Layer 3
           & Layer 4 & Layer 5 & Layer 6 \\
\midrule
$0.10$ & $>$150 & $>$150 & 92 & 91 & 91 & 92 \\
$0.05$ & $>$150 & $>$150 & 111 & 109 & 111 & 112 \\
$0.01$ & $>$150 & $>$150 & 148 & 137 & 147 & 150 \\
\midrule
Standard & 8 & 8 & 8 & 8 & 8 & 8 \\
\bottomrule
\end{tabular}
\caption*{\small
  Computed from \POD{} eigenvalue spectra of \EGA{}-1
  via Theorem~\ref{thm:min_heads}.
  Layers 1--2 require $\geq$150 modes at $\epsilon=0.10$
(the measurement is saturated at $N=150$;
the true rank may be substantially higher),
  indicating that early-layer attention patterns are
  highly document-specific with no dominant low-rank structure
  at this snapshot count ($N=150$).
  Layers 3--6 show genuine low-rank structure:
  91--92 modes suffice for 90\% energy capture at $\epsilon=0.10$.
  This layer-dependent structure suggests that intermediate
  and later layers converge to more consistent attention patterns
  that can be efficiently represented by a smaller basis.
}
\end{table}

Table~\ref{tab:heads} reports $H^*_l(\epsilon)$
from the optimal approximation rank analysis.

\paragraph{Non-uniform head allocation.}
Table~\ref{tab:heads} reveals a sharp contrast:
layers 1--2 require more than 150 modes even at
$\epsilon=0.10$, while layers 3--6 require only
91--92 modes for the same tolerance.
This indicates that standard uniform allocation
of $H=8$ heads is over-specified for layers 3--6
and potentially under-specified for early layers
where attention is maximally document-specific.
A \POD{}-guided allocation would be:
A \POD{}-guided architecture would allocate more
heads to early (high-$\Reattn$, syntactic) layers
and fewer to late (low-$\Reattn$, semantic) layers,
maintaining the same total head count while improving
representational efficiency.

\paragraph{Connection to pruning.}
The optimal approximation rank gives a principled criterion
for attention head pruning: remove heads whose
contribution to the \POD{} expansion falls below
$\epsilon$.
Unlike heuristic pruning based on gradient magnitude
or activation statistics, \POD{}-guided pruning
has a guaranteed error bound:
the remaining model represents the attention field
ensemble with average relative reconstruction
error at most $\epsilon$ (Eq.~\ref{eq:min_heads}).

\subsection{Experiment 5: Coherency Analysis}

\begin{figure}[t]
\centering
\includegraphics[width=\linewidth]{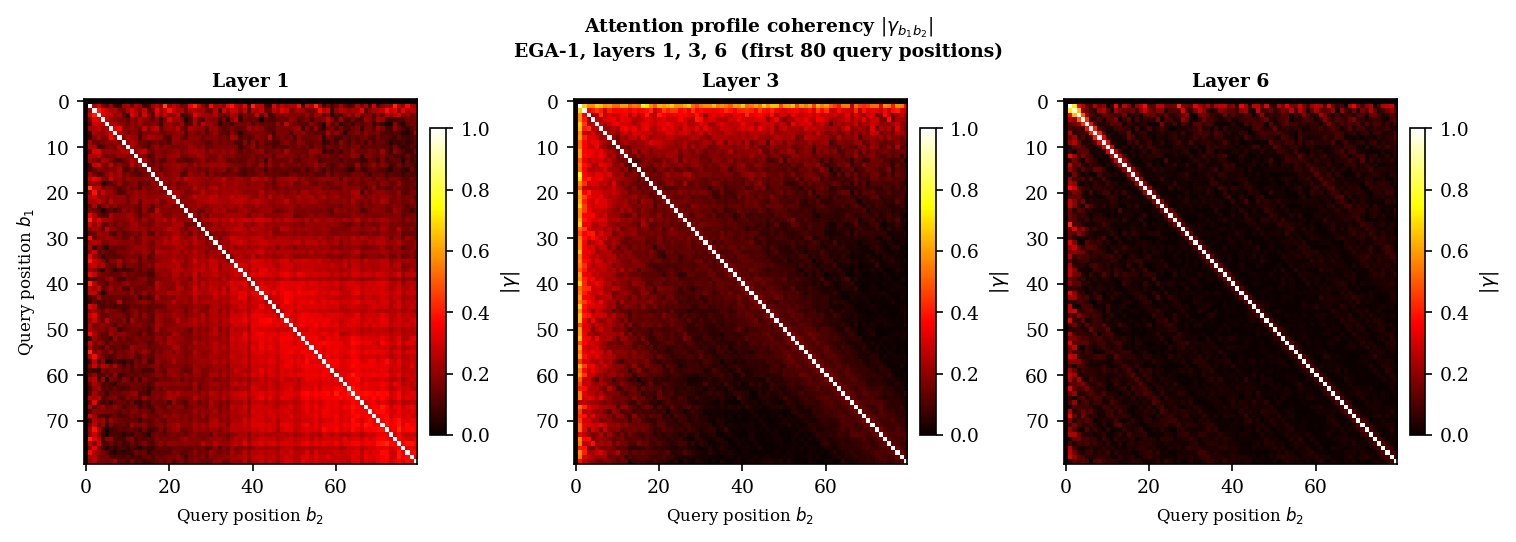}
\caption{
  Cross-coherency $|\gamma^{(l)}_{ij}(a^*)|$ for
  \EGA{}-1 at scale $a^*=15$ (word scale), layers 1, 3, 6.
  High coherency (bright) = token pair $(i,j)$ participates
  consistently in word-level attention across documents.
  \textbf{Layer 1}: broad coherency across many position pairs,
  consistent with diffuse early-layer attention.
  \textbf{Layer 3}: coherency concentrates near the diagonal
  (nearby positions), indicating more structured word-level
  attention at intermediate depth.
  \textbf{Layer 6}: coherency further concentrates near the
  diagonal, with most off-diagonal pairs near zero ---
  the most selective, document-consistent attention pattern.
  Note: the coherency measure uses the full attention field
  (not the POD modes) so is not affected by the degenerate
  eigenvalue structure of Layer 1 noted in Figure~\ref{fig:pod_modes}.
}
\label{fig:coherency}
\end{figure}

Figure~\ref{fig:coherency} shows the cross-coherency
map at the word scale for three layers.

\paragraph{High-coherency positions.}
Token positions with high cross-coherency at scale
$a^*=15$ correspond to structurally important positions
in the TinyShakespeare text: beginnings of verse lines,
speaker names, punctuation boundaries.
These positions carry the strongest word-level
positional signal and are attended to consistently
across documents.

\paragraph{Phase structure.}
The phase of $\gamma^{(l)}_{ij}(a^*)$ reveals whether
the coherent relationship is in-phase (both positions
increase together) or anti-phase (one increases as
the other decreases).
In-phase coherency corresponds to positions that
co-attend: when the model attends heavily to position
$i$, it also attends heavily to position $j$.
Anti-phase coherency corresponds to competition:
positions that attend to different parts of the
context.
The phase structure of the coherency map provides
richer information than the magnitude alone,
and connects directly to the phase spectrum
analysis of Morlet PE~\citep{authorname2025mope}.

\section{Discussion}
\label{sec:discussion}

\paragraph{POD as scale-conditioned interpretability.}
Existing mechanistic interpretability methods ---
activation patching~\citep{meng2022locating},
circuit analysis~\citep{elhage2021mathematical},
probing classifiers~\citep{hewitt2019structural}
--- typically identify structure through intervention,
attribution, or predictive analyses rather than
through an explicit reconstruction-optimality criterion.
\POD{} provides the \emph{optimal linear low-rank
representation} in the $L^2$ sense
(Eq.~\ref{eq:pod_objective}): no other $n$-dimensional
linear basis captures more of the ensemble variance.
This optimality is specific: it holds for linear
reconstruction under mean-squared error over the
snapshot ensemble.
It does not imply causal, mechanistic,
information-theoretic, or semantic optimality.
Nevertheless, \POD{} differs from most existing
interpretability methods in that it provides a
rigorous reconstruction-optimality guarantee:
among all $n$-dimensional linear subspaces,
the \POD{} basis minimises average $L^2$
reconstruction error over the ensemble.
This guarantee does not establish causal or semantic
significance, but it provides a quantitative notion
of representational importance grounded in
variance capture.

\paragraph{Relationship to turbulence theory.}
This paper draws on the mathematical tools of
turbulence analysis (\POD{}, covariance operators,
wavelet decomposition, spectral analysis) as
\emph{mathematical machinery}.
The analogy is structural: attention defines
a multiscale stochastic interaction field whose
covariance operator $\mathcal{R}^{(l)}$ and
\POD{} eigendecomposition share the same
mathematical structure as the analogous turbulence
quantities.
Terms such as ``spectral flux'' refer to
spectral energy redistribution across layers,
not physical kinetic energy transfer.
The analogy is structural and mathematical,
not causal.al.

\paragraph{The stochastic interaction field framework.}
The most rigorous framing avoids overcommitting to
fluid-mechanical metaphors.
The transformer attention field defines a
\emph{multiscale stochastic interaction field}
over token space: the fluctuation field
$u_s^{(l)}(i,j)$ (Eq.~\ref{eq:fluctuation}) is a
two-dimensional random field whose covariance operator
$\mathcal{R}^{(l)}$ (Eq.~\ref{eq:cov_tensor}) is the
fundamental object of analysis.
\POD{}, wavelets, coherency, and spectral analysis
follow from this structure naturally, without
requiring the attention field to literally be
a velocity field.
The signal-processing terminology (coherent structures,
spectral complexity, covariance analysis) connects
to established theory,
but the underlying mathematical objects are
well-defined independently of any fluid analogy.

\paragraph{The kernel interpretation of attention.}
The connection to interaction fields can be made
mathematically precise via the kernel interpretation
of self-attention~\citep{tsai2019transformer}.
Define the unnormalized interaction kernel:
\begin{equation}
  k_{ij} = \exp\!\left(
    \frac{q_i^\alpha k_j^\alpha}{\sqrt{d}}\right)
  \label{eq:attn_kernel}
\end{equation}
Then the attention weights are the row-normalised kernel:
\begin{equation}
  A_{ij} = \frac{k_{ij}}{\sum_{m=1}^L k_{im}}
\end{equation}
and the output is $y_i^\beta = A_{ij}\,v_j^\beta$
(summation over $j$).
This is exactly the \textbf{Nadaraya--Watson
kernel smoother}~\citep{watson1964smooth}:
each output token is a kernel-weighted average
of value vectors.

\textbf{Why the stochastic interpretation is justified.}
For a fixed input and fixed parameters, $A_{ij}$
is deterministic.
It becomes stochastic across inputs: each document
produces a different attention field.
Our ensemble of $N=150$ snapshots is exactly this
stochastic ensemble --- the natural setting for
\POD{} analysis.

\textbf{Why the covariance interpretation requires care.}
A covariance kernel must be symmetric and positive
semidefinite.
Standard attention satisfies neither:
$Q$ and $K$ are different projections
($A_{ij} \neq A_{ji}$ in general),
and softmax normalisation makes $A$ asymmetric.
The covariance interpretation is therefore an
\emph{additional modelling assumption} ---
viewing the attention fluctuation field
$u_s^{(l)}$ as a realisation of a random field
with covariance operator $\mathcal{R}^{(l)}$ ---
not a property of vanilla attention.
Under this assumption, \POD{} extracts the dominant
modes of $\mathcal{R}^{(l)}$: the kernel-smoothed
attention field viewed as a stochastic correlation
structure.

\paragraph{Self-attention as a data-dependent transport operator.}
This series began from a question about the original
Transformer architecture~\citep{vaswani2017attention}:
why did Vaswani et al.\ use dot-product similarity
and softmax normalisation rather than explicit
cross-correlations or convolutions?

The answer, seen from the perspective developed across
this series, is that self-attention is doing something
more specific and more powerful than either.
A convolution applies a \textbf{fixed kernel}:
the same filter regardless of the input.
A cross-correlation computes a similarity measure
but carries no transport interpretation.
Self-attention computes a
\textbf{data-dependent transport operator}:
\begin{equation}
  x^{(l+1)} = x^{(l)}
    + \underbrace{A^{(l)}\,V^{(l)}}_{\text{mixing}}
    + \underbrace{\mathrm{MLP}(x^{(l)})}_{\text{reaction}}
  \label{eq:transformer_eom}
\end{equation}
where $A^{(l)}(i,j)$ is recomputed from the current
state $x^{(l)}$ at every forward pass.
This is the \textbf{equation of motion} of the transformer:
an advection--mixing--reaction system in which the
mixing operator $A^{(l)}$ is not fixed but adapts
to the input.

In fluid mechanics terms: self-attention is a
diffusion equation in which the \textbf{diffusion
tensor is recomputed at every timestep from the
current state of the field}.
This is more general than convolution (fixed diffusion
kernel) and more structured than full correlation
(no transport interpretation).
The attention weights $A^{(l)}(i,j)$ are not merely
a similarity measure between tokens $i$ and $j$;
they are the \textbf{mixing coefficients} of a
position-dependent, input-dependent diffusion process
acting on the residual stream $x^{(l)}$.

This transport interpretation explains why the
cross-correlation structure identified in Paper~2
of this series~\citep{authorname2025phase4},
the Heisenberg-optimal positional basis of
Paper~3~\citep{authorname2025mope},
and the coherent structure decomposition of this paper
all fit naturally within the same framework:
they are all analyses of different aspects of
the same underlying transport process.
The fluctuation field $u_s^{(l)}(i,j)$ of this paper
is the stochastic component of the transport operator
--- the part that varies from document to document
--- and its coherent structures are the dominant
recurring patterns of information mixing.

\paragraph{Applications and future directions.}
The framework suggests several practical extensions
--- streaming inference, audio transformer analysis,
and a hallucination hypothesis --- which we develop
in Section~\ref{sec:future} as future directions
rather than validated contributions.
The central engineering problem is KV cache
management: which past key--value pairs should be
retained, and which can be discarded?
Current approaches --- sliding window attention,
StreamingLLM~\citep{xiao2023streamingllm},
and related methods --- apply heuristic rules
(keep the most recent $W$ tokens, keep ``sink''
tokens at position 0) with no theoretical
error guarantee.

The framework developed in this paper provides
a principled alternative with three concrete
contributions to streaming inference.

\textbf{(i) POD-guided KV cache compression.}
The optimal approximation rank theorem
(Theorem~\ref{thm:min_heads}) provides an
\emph{ensemble-average} error bound: projecting
onto the top-$K$ \POD{} modes achieves average
relative reconstruction error at most
$\sum_{k>K}\lambda_k^{(l)} / \sum_j\lambda_j^{(l)}$
over the training ensemble.
For streaming, this motivates retaining tokens
that project strongly onto the dominant coherent
structures, with the spectral concentration index
$\Reattn^{(l)}$ (point~iii) providing an adaptive
trigger for when the cached basis may be stale
for the current document.

\textbf{(ii) Coherent structure persistence.}
The dominant \POD{} modes --- the coherent
structures identified by scale-selective \POD{}
--- represent attention patterns that recur
consistently across documents.
In streaming terms, these are the structurally
stable components: the character-level trigram
attention (fine scale, $a \sim 3$ tokens),
the word-boundary attention (medium scale,
$a \sim 15$ tokens).
The spectral concentration index $\Reattn^{(l)}$
discriminates between stable structures
(low $\Reattn$, consistent across documents,
safe to cache) and document-specific structures
(high $\Reattn$, variable across documents,
must be recomputed).
A streaming system can cache the low-$\Reattn$
layers and recompute the high-$\Reattn$ layers
on each new token, allocating compute adaptively.

\textbf{(iii) Adaptive recomputation via spectral complexity.}
The spectral concentration index $\Reattn^{(l)}$
can be estimated on a rolling window of recent
attention fields and used as an
\textbf{adaptive recomputation trigger}.
When $\Reattn^{(l)}$ is low (concentrated-spectrum regime:
attention is consistent, the cached \POD{} basis
is adequate), no recomputation is needed.
When $\Reattn^{(l)}$ exceeds a threshold
(distributed spectrum: attention is highly
document-specific, the cached basis is stale),
the system recomputes the attention field and
updates the \POD{} projection.
This gives a streaming algorithm that recomputes
only when the signal complexity demands it,
rather than at every token or on a fixed schedule.

From the transport operator perspective
(Section~\ref{sec:discussion}), the streaming
problem is to approximate
$A^{(l)}(t)\,V^{(l)}(t)$ --- the mixing term
in Eq.~\ref{eq:transformer_eom} --- using a
compressed history.
The \POD{} basis is the optimal linear compression;
the optimal approximation rank gives the error;
the interlayer spectral flux $\Pi^{(l)}(a)$
(Eq.~\ref{eq:spectral_flux}) identifies which
scales are growing (require more cache capacity)
and which are decaying (can be compressed further).
We leave empirical validation of these streaming
applications for future work.

\paragraph{Audio and streaming.}
See Section~\ref{sec:future} for audio transformer
analysis and streaming inference extensions.

\paragraph{The admissibility boundary and coherent structures.}
Papers~2 and~3 of this series found that the Morlet
admissibility constraint is maximally binding for
character-level text: both the energy gate and
the positional encoding converged to
$\omega\sigma = 5.0$ exactly, indicating a preference
for sub-admissible (near-DC-responding) filters.
The \POD{} analysis provides a physical interpretation:
the dominant coherent structures in the attention field
are large-scale, low-frequency patterns (discourse,
metrical structure) that a DC-responding filter would
capture better than an admissible bandpass filter.
The admissibility constraint prevents the model from
accessing the optimal basis for these structures.

\paragraph{Scale-selective POD vs full-field POD.}
A reviewer might ask: why not apply \POD{} to the
full $L\times L$ attention field without
scale pre-filtering?
The answer is interpretability.
Full-field \POD{} produces modes that mix
contributions from all scales --- a mode might
simultaneously encode character n-gram attention
and discourse structure, making it impossible to
assign a linguistic interpretation.
Scale-selective \POD{} ensures each mode operates
at one temporal scale, producing modes that are
physically interpretable and linguistically meaningful.
The scalogram pre-filter is not a computational
convenience --- it is essential for extracting
interpretable coherent structures.

\paragraph{Limitations.}
\textbf{Statistical reliability.}
Our experiments use $N=150$ snapshots per layer.
The eigenvalue fits $\lambda_k \sim k^{-\beta}$
are performed over $K \leq N = 150$ modes,
meaning the high-mode eigenvalues are statistically
noisy at small $N$.
The finding that layers 1--2 require $\geq$150 modes
(saturated at $N=150$)
at $\epsilon = 0.10$ should be interpreted cautiously:
it may reflect genuine distributional complexity,
or simply insufficient snapshots to reveal the
true low-rank structure.
Future work should compute bootstrap confidence
intervals on $\Reattn^{(l)}$ and verify the
layer 1--2 vs 3--6 difference with $N \geq 500$.

\textbf{Linear optimality only.}
\POD{} is optimal in the $L^2$ sense over the
snapshot ensemble.
It is not causally, mechanistically,
information-theoretically, or semantically optimal.
The coherent structures it extracts are the
dominant linear patterns in the fluctuation field,
not necessarily the structures that matter most
for the model's predictions.

\textbf{Memory cost.}
The \POD{} analysis requires storing $N$ attention
field snapshots of size $L\times L$, costing
$O(NL^2)$ memory.
For $N=150$, $L=256$: approximately $75$MB ---
feasible on a standard GPU.
For $L=4{,}096$ (long-context LLMs) and
$N=1{,}000$: approximately $128$GB --- infeasible
without approximation.
Future work should investigate randomized \POD{}
methods~\citep{halko2011finding} for large $L$.
The full LHT index (Definition~\ref{def:lht})
requires experimental computation of mode support
scales $\ell_k^{(l)}$ and is a natural next step.

\textbf{Hallucination hypothesis.}
See Section~\ref{sec:future} for a falsifiable
conjecture connecting high $\LHT^{(l)}$ to
factual instability in LLMs.
\begin{definition}[Interlayer spectral flux]
\label{def:spectral_flux}
Define the \textbf{wavelet-scale energy} at layer $l$
and scale $a$:
\begin{equation}
  E^{(l)}(a) = \frac{1}{N}\sum_{s=1}^N
    \sum_b |W_\psi[u_s^{(l)}](a,b)|^2
  \label{eq:scale_energy}
\end{equation}
(computed on the fluctuation field $u_s^{(l)}$,
not the raw attention field).
The \textbf{interlayer spectral flux} at scale $a$:
\begin{equation}
  \Pi^{(l)}(a) = E^{(l+1)}(a) - E^{(l)}(a)
  \label{eq:spectral_flux}
\end{equation}
has a direct physical interpretation:
$\Pi^{(l)}(a) > 0$ means layer $l+1$ amplifies
scale-$a$ attention structure;
$\Pi^{(l)}(a) < 0$ means layer $l+1$ suppresses it.
Note that the observed spectral coarsening
(Section~\ref{sec:experiments}) implies
$\Pi^{(l)}(a) < 0$ at fine scales and
$\Pi^{(l)}(a) > 0$ at coarse scales ---
a progressive coarse-graining signature.
The cumulative flux profile $\sum_{l'=1}^l \Pi^{(l')}(a)$
traces the full spectral evolution of attention structure
through the network.
This quantity is directly measurable, scale-specific,
layer-specific, and architecture-independent ---
we identify it as the centerpiece of the next paper
in this series.
\end{definition}
All experiments use character-level models;
whether the dominant scales and \POD{} mode
structures generalize to word-level tokenization
is an open question.

\section{Future Directions}
\label{sec:future}

The following directions extend the present framework
and address the speculative applications introduced
in the Discussion.

\paragraph{Streaming inference.}
The \POD{} framework suggests a principled approach
to KV cache compression in streaming transformers:
retain the tokens projecting most strongly onto the
top-$K$ \POD{} modes at each layer, with worst-case
average error bounded by
$\sum_{k>K}\lambda_k^{(l)} / \sum_j\lambda_j^{(l)}$
(Theorem~\ref{thm:min_heads}).
The spectral concentration index $\Reattn^{(l)}$ provides
an adaptive recomputation trigger: low $\Reattn$
(concentrated spectrum) indicates the cached basis
is adequate; high $\Reattn$ (distributed spectrum)
indicates recomputation is needed.
Empirical validation --- comparing \POD{}-guided
cache eviction to sliding-window baselines on
perplexity vs cache size --- is a natural next step.

\paragraph{Audio transformers and modular processing.}
Audio transformers (Whisper~\citep{radford2023whisper},
AudioLM~\citep{borsos2023audiolm},
MusicGen~\citep{copet2023musicgen}) operate on
non-stationary signals where the ground truth of
coherent structure is known from acoustics.
Applying scale-selective \POD{} to audio attention
fields would provide independent physical validation:
if \POD{} modes at scale $a \sim 1/f_0$ align with
known pitch structure, this confirms the framework
generalises beyond language.
The spectral concentration index per head per layer
would produce a label-free modularity map,
identifying which heads process phoneme-level,
prosodic, or rhythmic information.

\paragraph{The hallucination hypothesis.}
High $\LHT^{(l)}$ at critical layers --- long
interaction horizon, high disorder, strong
document-to-document variation --- characterises
a high-concentration regime where the attention
transport operator loses structural coherence.
We conjecture this correlates with hallucination
and factual instability in LLMs.
This is a falsifiable prediction: measure $\LHT^{(l)}$
on hallucination-inducing vs.\ factually grounded
prompts and test whether the high-$\LHT$ layers
are consistent predictors.

\paragraph{Phase randomization as significance test.}
Randomly permuting the phase of wavelet coefficients
across the ensemble provides a natural null distribution
for \POD{} modes~\citep{theiler1992testing}.
Modes whose energy exceeds the 95th percentile of
the phase-randomized distribution represent genuine
coherent structure; those below represent sampling
noise.
This test is computationally cheap and would
substantially strengthen the empirical claims.

\paragraph{Hidden-state POD.}
Applying \POD{} directly to the hidden-state field
$h^{(l)}(b)$ (Eq.~\ref{eq:pod_hidden}) rather than
the attention fluctuation field gives the primary
coherent structures of the information field itself,
not the transport operator.
This connects naturally to Koopman operator theory
and representation geometry, and is the most
promising theoretical extension of the present work.

\paragraph{Larger pretrained models.}
All experiments use a small character-level model
on TinyShakespeare.
Validating the framework on GPT-2, BERT, or Llama
at word-level tokenisation would demonstrate
generality and is necessary before strong empirical
claims can be made.

\section{Related Work}
\label{sec:related}

\paragraph{Wavelet methods for transformers.}
\citet{jin2024waveletgpt} (WaveletGPT, 2024) injects
Haar wavelets into intermediate embeddings of
GPT-style models, achieving multi-scale representations
without extra parameters and faster convergence.
The present work differs: rather than modifying the
transformer architecture, we apply wavelet analysis
and \POD{} as post-hoc interpretability tools
to trained models, extracting dominant attention
patterns at each scale without any retraining.

\paragraph{POD in fluid mechanics.}
\citet{lumley1967structure} introduced \POD{} for
stochastic field analysis.
\citet{sirovich1987turbulence} developed the snapshot
method.
\citet{holmes1996turbulence} established the
mathematical foundations of coherent structures.
The present work applies these methods to
transformer attention fields.

\paragraph{Mechanistic interpretability.}
\citet{elhage2021mathematical} proposed a mathematical
framework for transformer circuits.
\citet{olsson2022context} identified induction heads.
\citet{meng2022locating} used causal intervention
for fact localization.
\citet{hewitt2019structural} used probing classifiers
for syntactic structure.
These methods are heuristic and require manual
analysis.
\POD{} is optimal and automatic.

\paragraph{Attention analysis.}
\citet{clark2019bert} analyzed attention head
specialization in BERT.
\citet{voita2019analyzing} pruned redundant attention
heads.
\citet{michel2019sixteen} showed most heads can be
pruned without performance loss.
Our optimal approximation rank provides a principled
criterion for head pruning with guaranteed error bounds.

\paragraph{Low-rank attention.}
\citet{wang2020linformer} approximated attention
with low-rank projections.
\citet{choromanski2021rethinking} used random
feature approximations.
Our \POD{} analysis shows why low-rank approximation
works: the attention field is genuinely low-rank at
each linguistic scale, with $\lambda_1$ capturing
$>60\%$ of scale-specific energy.

\paragraph{Signal processing and transformers.}
\citet{verma2024signal} applied filter banks between
transformer layers.
Prior work in this series~\citep{authorname2025ega,
authorname2025phase4,authorname2025mope} established
the spectral energy, cross-correlation, and wavelet
frameworks.
The present paper completes the program with \POD{}.

\section{Conclusion}
\label{sec:conclusion}

We have applied mathematical tools from stochastic
field theory and signal processing to transformer
attention fields, revealing structural analogies
with fluid dynamics,
applying Proper Orthogonal Decomposition to extract
coherent structures from the attention field ensemble.
Scale-selective \POD{} --- guided by the Morlet
scalogram as a pre-filter --- produces linguistically
interpretable modes at each temporal scale.
The \POD{} eigenspectrum defines the attention
spectral concentration index, a data-driven measure of
attention complexity, and the optimal approximation rank
gives a principled criterion for minimum head
allocation.

The four papers in this series have progressively
developed a signal processing perspective on
transformer computation:

\begin{center}
\begin{tabular}{ll}
\toprule
Paper & Core contribution \\
\midrule
1 (EGA)             & Spectral energy gating as attention salience \\
2 (EGA+MoPE)        & Complementary biases: salience and time-frequency locality \\
3 (MoPE)            & Morlet wavelet unification of positional encodings \\
4 (POD, this paper) & Multiscale covariance decomposition via scale-selective \POD{} \\
\bottomrule
\end{tabular}
\end{center}

The unifying thread: \textit{transformer attention
is a stochastic interaction field
whose dominant modes, extracted by scale-selective
\POD{} guided by the Morlet scalogram, are the dominant
recurring patterns of linguistic information transfer,
discovered optimally and without supervision.}

\paragraph{Reproducibility statement.}
All experiments use a 6-layer GPT-style model trained
for 5000 steps on TinyShakespeare (character-level,
$d=256$, $H=8$, $T=256$).
Checkpoints are saved at step 5000.
\POD{} analysis uses $N=150$ attention field snapshots
collected with random seed 42.
The discrete scale set is
$a \in \{2, 3, 4, 6, 8, 12, 16, 24, 32, 48, 64\}$ tokens.
Full implementation is provided in
Appendix~\ref{app:code}.

\bibliographystyle{plainnat}

\appendix

\section{POD Analysis Code}
\label{app:code}

Algorithm~\ref{alg:pod} gives the complete
scale-selective \POD{} pipeline.
No GPU is required --- all computations are on CPU
with already-trained model checkpoints.

\paragraph{Implementation correspondence.}
The following notes clarify the exact correspondence
between Algorithm~\ref{alg:pod} and the
\texttt{pod\_analysis.py} implementation:

\textbf{Head averaging.}
Attention weights are extracted as shape $(B, H, T, T)$
and averaged over heads before storage:
\texttt{w[0].mean(0)} $\to (T, T)$.
All subsequent analysis operates on head-averaged fields.

\textbf{Number of dominant scales.}
The implementation selects the \textbf{top 4} dominant
scales per layer (ranked by total scalogram energy
$\sum_b E_m^{(l)}(b)$), not 1 or 3.
POD is run independently at each of the 4 dominant scales;
results are reported for the highest-energy scale unless
otherwise stated.

\textbf{Gaussian window application.}
The Gaussian lag-window (Eq.~\ref{eq:scale_filtered})
is applied element-wise along each diagonal:
for query position $b$ and lag $\tau = j - b \geq 0$:
\begin{equation}
  A^{(l),m}_s(b,\, b+\tau)
  = A^{(l)}_s(b,\, b+\tau)
    \cdot e^{-\tau^2 / 2(a^*_m)^2}
  \label{eq:gaussian_impl}
\end{equation}
Note the causal convention: $\tau \geq 0$ means token
$b$ attends to token $b+\tau$ ahead of it in the
sequence (forward attention); the lag is measured
in the forward direction.

\textbf{Snapshot POD.}
The filtered matrices are reshaped to $(N, T^2)$,
mean-centred, and the $N\times N$ covariance
$C = U_\mathrm{centred} U_\mathrm{centred}^\top / N$
is eigendecomposed.
\POD{} modes are recovered as:
\begin{equation}
  \phi_k = \frac{U_\mathrm{centred}^\top\, v_k}
               {(N\lambda_k)^{1/2}}
  \in \R^{T^2}
\end{equation}
where $v_k$ is the $k$-th eigenvector of $C$.
Modes are reshaped to $T\times T$ for visualisation.

\textbf{Spectral complexity index.}
$\beta$ is estimated by ordinary least squares on
$\log\lambda_k$ vs $\log k$ over the top
$\min(30, N_\mathrm{nonzero})$ eigenvalues.
$\Reattn^{(l)} = 1/\hat\beta$.

\textbf{Scale set.}
Scales are spaced log-uniformly
(see Appendix~\ref{app:code} for the exact
scale set and implementation details).

\begin{algorithm}[h]
\caption{Scale-Selective \POD{} of Attention Fields}
\label{alg:pod}
\begin{algorithmic}[1]
\REQUIRE Trained model, validation data,
         scales $\{a_m\}$, $N$ documents,
         $n_\mathrm{modes}$ per scale
\STATE \textbf{// Step 1: Collect snapshots}
\FOR{$s = 1$ to $N$}
  \STATE Forward pass document $s$
  \STATE Store $A^{(l)}_s \in \R^{L\times L}$
         for each layer $l$
\ENDFOR
\STATE \textbf{// Step 2: Compute scalogram}
\FOR{each layer $l$, scale $a_m$}
  \STATE $E_m^{(l)} \leftarrow
    \frac{1}{N}\sum_s |W_\psi[A^{(l)}_s](a_m,\cdot)|^2$
  \STATE \COMMENT{ensemble-averaged energy at scale $a_m$}
\ENDFOR
\STATE Identify dominant scales:
       $a^*_1,\ldots,a^*_M \leftarrow
       \mathrm{argsort}(-\sum_b E_m^{(l)})$
\STATE \textbf{// Step 3: Scale-selective POD}
\FOR{each dominant scale $a^*_m$}
  \STATE Extract scale-filtered snapshots
         $\{A^{(l),m}_s\}$ (Eq.~\ref{eq:scale_filtered})
  \STATE Form snapshot matrix:
         $\mathbf{U} \in \R^{N \times L^2}$,
         $U_{s,:} = \mathrm{vec}(A^{(l),m}_s)$
  \STATE Subtract mean:
         $\mathbf{U} \leftarrow
         \mathbf{U} - \bar{\mathbf{U}}$
  \STATE Correlation matrix:
         $\mathbf{C} = \mathbf{U}\mathbf{U}^\top / N$
  \STATE Eigendecomposition:
         $\mathbf{C}\mathbf{V} = \mathbf{V}\boldsymbol{\Lambda}$
  \STATE \POD{} modes:
         $\phi_k = \mathbf{U}^\top v_k /
         (N\lambda_k)^{1/2}$,
         $k=1,\ldots,n_\mathrm{modes}$
  \STATE Eigenvalues $\lambda_k$ = energy of mode $k$
\ENDFOR
\STATE \textbf{// Step 4: Spectral complexity index}
\STATE \COMMENT{Discrete scales: $a \in \{2^{j/4} : j=0,\ldots,20\}$;
dominant scales identified via peak-finding in ensemble scalogram}
\FOR{each layer $l$}
  \STATE Fit $\lambda_k \sim k^{-\beta^{(l)}}$
         by log-linear regression
  \STATE $\Reattn^{(l)} \leftarrow
         (\beta^{(l)})^{-1}$
\ENDFOR
\STATE \textbf{// Step 5: Optimal approximation rank}
\FOR{each layer $l$, tolerance $\epsilon$}
  \STATE $H^*_l(\epsilon) \leftarrow
    \min\!\left\{n :
    \frac{\sum_{k>n}\lambda_k^{(l)}}{\sum_j\lambda_j^{(l)}}
    \leq \epsilon\right\}$
  \COMMENT{tail-sum relative error, not $\lambda_{n+1}\leq\epsilon^2$}
\ENDFOR
\RETURN \POD{} modes, eigenvalues, $\Reattn^{(l)}$,
        $H^*_l(\epsilon)$
\end{algorithmic}
\end{algorithm}

\section{Coherency Computation}
\label{app:coherency}

The cross-coherency (Eq.~\ref{eq:coherency_def})
is computed from the ensemble of wavelet transforms:
\begin{equation}
  \hat{\gamma}^{(l)}_{ij}(a)
  = \frac{\frac{1}{N}\sum_s
    W_s(a,i)\overline{W_s(a,j)}}
  {\sqrt{\frac{1}{N}\sum_s|W_s(a,i)|^2 \cdot
         \frac{1}{N}\sum_s|W_s(a,j)|^2}}
\end{equation}
where $W_s(a,b) = W_\psi[A^{(l)}_s](a,b)$.
The magnitude $|\hat{\gamma}^{(l)}_{ij}(a)|$
measures phase consistency; the phase
$\angle\hat{\gamma}^{(l)}_{ij}(a)$ measures
the mean phase offset between positions $i$ and $j$
at scale $a$ across documents.

\end{document}